\documentclass{article}



     \usepackage[final,nonatbib]{neurips_2019}


\usepackage[utf8]{inputenc} 
\usepackage[T1]{fontenc}    
\usepackage{hyperref}       
\usepackage{url}            
\usepackage{booktabs}       
\usepackage{amsfonts}       
\usepackage{nicefrac}       
\usepackage{microtype}      

\usepackage{algorithm}
\usepackage{algorithmic}
\usepackage{cuted}
\usepackage{lipsum, color}
\usepackage{amsmath}
\usepackage{amsthm}
\usepackage{subfigure}
\usepackage{verbatim}
\usepackage{amsfonts}
\usepackage{color}
\usepackage{graphicx}
\usepackage{epstopdf}
\usepackage{graphicx}
\usepackage{filecontents}
\usepackage{mathtools}

\newtheorem{mydef}{Definition}
\newcommand{\beq}{\begin{equation}}
\newcommand{\eeq}{\end{equation}}



\DeclareMathOperator*{\argmin}{arg\,min}

\newcommand{\ra}{\rightarrow}

\newcommand{\lb}{\left(}
\newcommand{\rb}{\right)}
\newcommand{\lla}{\left\langle}
\newcommand{\rra}{\right\rangle}

\newtheorem{lemma}{Lemma}

\newcommand{\bcr}{\begin{color}{red}} 
\newcommand{\bcc}{\begin{color}{cyan}} 
\newcommand{\bcb}{\begin{color}{blue}}
\newcommand{\ec}{\end{color}}

\newcommand{\mybar}[1]{#1}
\newcommand{\myvec}[1]{#1}

\newcommand{\Cgamma}{C}
\newcommand{\Sgamma}{S^\gamma}
\newcommand{\ustar}{u^*}
\newcommand{\pstar}{p^*}
\newcommand{\thetastar}{\theta^*}
\newcommand{\Jalpha}{J^{\alpha}}

\newcommand{\RCE}{\Theta^{C}_{\mathcal{E}}}
\newcommand{\RCEp}{\Theta^{C}_{\mathcal{E}'}}
\newcommand{\RJE}{\Theta^{J^{\alpha}}_{\mathcal{E}}} 
\newcommand{\RCD}{\Theta^{C}_{\mathcal{E}'}}

\newcommand{\RC}[1]{\Theta^{C}_{#1}}

\newtheorem{theorem}{Theorem}

\title{Adaptive Smoothing Path Integral Control}

%

\author{%
  Dominik Thalmeier\\
  Radboud University Nijmegen\\
  Nijmegen, The Netherlands\\
  \texttt{d.thalmeier@science.ru.nl} \\
  \And
  Hilbert J. Kappen\\
  Radboud University Nijmegen\\
  Nijmegen, the Netherlands\\
  \texttt{b.kappen@science.ru.nl} \\
  \And
  Simone Totaro\\
  Universitat Pompeu Fabra\\
  Barcelona, Spain\\
  \texttt{simone.totaro@gmail.com} \\
  \And
  Vicen\c{c} G\'omez\\
  Universitat Pompeu Fabra\\
  Barcelona, Spain\\
  \texttt{vicen.gomez@upf.edu}
}

\begin{document}

\maketitle

\begin{abstract}
In Path Integral control problems a representation of an optimally controlled dynamical system can be formally computed and serve as a guidepost to learn a parametrized policy.
The Path Integral Cross-Entropy (PICE) method tries to exploit this, but is hampered by poor sample efficiency.
We propose a model-free algorithm called ASPIC (Adaptive Smoothing of Path Integral Control) that applies an inf-convolution to the cost function to speedup convergence of policy optimization.
We identify PICE as the infinite smoothing limit of such technique and show that the sample efficiency problems that PICE suffers disappear for finite levels of smoothing.
For zero smoothing this method becomes a greedy optimization of the cost, which is the standard approach in current reinforcement learning.
We show analytically and empirically that intermediate levels of smoothing are optimal, which renders the new method superior to both PICE and direct cost-optimization.
\end{abstract}

\section{Introduction}
Optimal control of non-linear dynamical systems that are continuous in time and space is hard.
Methods that have proven to work well introduce a parametrized policy like a neural network~\cite{mnih2015human,duan2016benchmarking} and directly optimize the expected cost using gradient descent~\cite{williams1992simple,peters2008reinforcement,schulman2015trust,heess2017emergence}.
To achieve a robust decrease of the expected cost, it is important to ensure that at each step the policy stays in the proximity of the old policy \cite{duan2016benchmarking}. 
This can be achieved by enforcing a trust region constraint \cite{peters2010relative,schulman2015trust} or using adaptive regularization~\cite{heess2017emergence}.
However the applicability of these methods is limited, as in each iteration of the algorithm, samples from the controlled system have to be computed.
We want to increase the convergence rate of policy optimization to reduce the number of simulations needed.

To this end we consider Path Integral control problems~\cite{kappen_prl05,KappenML2012}, that offer an alternative approach to direct cost optimization and explore if this allows to speed up policy optimization.
This class of control problems permits arbitrary non-linear dynamics and state cost, but requires a linear dependence of the control on the dynamics and a quadratic control cost~\cite{kappen_prl05,bierkens2014explicit,thijssen2015path}. 
These restrictions allow to obtain an explicit expression for the probability-density of optimally controlled system trajectories. Through this, an information-theoretical measure of the deviation of the current control policy from the optimal control can be calculated. 
The Path Integral Cross-Entropy (PICE) method~\cite{kappen2015adaptive} proposes to use this measure as a pseudo-objective for policy optimization.

In this work we analyze a new kind of smoothing technique for the cost function based on recently proposed smoothing techniques to speed up convergence in deep neural networks~\cite{chaudhari2017deep}.
We adapt this technique to Path Integral control problems and 
show that \emph{(i)}, in contrast to \cite{chaudhari2017deep}, smoothing in Path Integral control can be solved analytically, providing an expression of the gradient that can directly be computed from Monte Carlo samples and \emph{(ii)}, we can interpolate between direct cost optimization and the PICE objective. 
Remarkably, the parameter governing the smoothing can be determined independently of the number of samples.

Based on these results, we introduce the ASPIC (Adaptive Smoothing of Path Integral Control) algorithm, a model-free algorithm that uses cost smoothing to speed up policy optimization. ASPIC adjusts the smoothing parameter in each step to keep the variance of the gradient estimator at a predefined level.

\section{Path Integral Control Problems}
\label{sec:PI}
Consider the (multivariate) dynamical system
\begin{align}
\dot{\myvec{x}}_t=\myvec{f}(\myvec{x}_t,t) + g(\myvec{x}_t,t)\lb \myvec{u}(\myvec{x}_t,t) + \myvec{\xi}_t \rb,
\label{eq:dyneq}
\end{align}
with initial condition $\myvec{x}_0$.
The control policy is implemented in the control function $\myvec{u}(\myvec{x},t)$, which is additive to the white noise $\xi_t$ which has variance $\frac{\nu}{dt}$.
Given a control function $u$ and a time horizon $T$, this dynamical system induces a probability distribution $p_u(\tau)$ over state trajectories $\tau=\{\myvec{x}_t|\forall t: 0<t\leq T\}$ with initial condition $\myvec{x}_0$.

We define
the regularized expected cost
\begin{align}
\Cgamma(p_u) &=  \lla V(\tau) \rra_{p_u} + \gamma KL(p_u||p_0),
\label{eq:KLform}
\end{align}
with $V(\tau)=\int_0^T V(\myvec{x}_t,t) dt$, where the strength of the regularization $KL(p_u||p_0)$ is controlled by the parameter $\gamma$. 

The Kullback-Leibler divergence $KL(p_u||p_0)$ puts high cost to controls $u$ that bring the probability distribution $p_u$ far away from the uncontrolled dynamics
$p_0$ where $\myvec{u}(\myvec{x}_t,t)=0$.
We can also rewrite the regularizer $KL(p_u||p_0)$ directly in terms of the control function $u$ by using the Girsanov theorem, c.f.,~\cite{thijssen2015path}: $\log\frac{p_u(\tau)}{p_0(\tau)}=\frac{1}{\nu}\int_0^T \lb\frac{1}{2} \myvec{u}(\myvec{x}_t,t)^T \myvec{u}(\myvec{x}_t,t) + \myvec{u}(\myvec{x}_t,t)^T\myvec{\xi}_t \rb dt$.
The regularization then takes the form of a quadratic control cost
\begin{align*}
KL(p_u||p_0) \hspace{-.1cm}&= \hspace{-.1cm}\lla \frac{1}{\nu}\int_0^T \lb\frac{1}{2} \myvec{u}(\myvec{x}_t,t)^T \myvec{u}(\myvec{x}_t,t) + \myvec{u}(\myvec{x}_t,t)^T\myvec{\xi}_t \rb dt\rra_{p_u}\hspace{-.3cm}=\lla\frac{1}{\nu}\int_0^T \frac{1}{2} \myvec{u}(\myvec{x}_t,t)^T \myvec{u}(\myvec{x}_t,t) dt \rra_{p_u}\hspace{-.3cm},
\end{align*}
where we used that $\lla \myvec{u}(\myvec{x}_t,t)^T\myvec{\xi}_t  \rra_{p_u} = 0$. This shows that the regularization $KL(p_u||p_0)$ puts higher cost for large values of the controller $u$.

The Path Integral control problem 
is to find the optimal control function $\ustar$ that minimizes 
\begin{align}
\label{eq:ustar}
\ustar = \argmin_u \Cgamma(p_u).
\end{align}
For a more complete introduction to Path Integral control problems, see \cite{thijssen2015path,kappen2015adaptive}.

\textbf{$-$Direct cost optimization using gradient descent}:
A standard approach to find an optimal control function is to introduce a parametrized controller $\myvec{u}_\theta(\myvec{x}_t,t)$~\mbox{\cite{heess2017emergence,williams1992simple,schulman2015trust}}.
This parametrizes the path probabilities $p_{u_\theta}$ and allows to optimize the expected cost~$\Cgamma(p_{u_\theta})$ \eqref{eq:KLform} using stochastic gradient descent on the cost function:
\begin{align}
\nabla_{\theta} \Cgamma(p_{u_\theta}) &= \lla \Sgamma_{p_{u_\theta}}(\tau) \nabla_{\theta} \log p_{u_\theta}(\tau) \rra_{p_{u_\theta}},
\label{eq:reinforcegrad}
\end{align}
with the stochastic cost $\Sgamma_{p_{u_\theta}}(\tau)\coloneqq V(\tau) + \gamma \log\frac{p_{u_\theta}(\tau)}{p_0(\tau)}$ (see App.~\ref{ap:1} for details).

\textbf{$-$The Path Integral Cross-Entropy method}:
An alternative approach to direct cost-optimization was introduced in \cite{kappen2015adaptive}, and takes advantage of the analytical expression for $p_{\ustar}$, the probability density of state trajectories induced by a system with the optimal controller $\ustar$, $p_{\ustar} = \argmin_{p_u} \Cgamma(p_u)$ with $\Cgamma(p_u)$ given by Eq.~\eqref{eq:KLform}.
Finding $p_{\ustar}$ is an optimization problem over probability distributions $p_u$ that are induced
by the controlled dynamical system \eqref{eq:dyneq}.
It has been shown \cite{bierkens2014explicit,thijssen2015path} that we can solve this by replacing the minimization over $p_u$ with a  minimization over all path probability distributions $p$:
\begin{align}
p_{\ustar} \equiv \pstar &:= \argmin_{p} \Cgamma(p) = \argmin_{p} \lla V(\tau) \rra_{p} + \gamma KL(p||p_0) = \frac{1}{Z} p_0(\tau) \exp\lb -\frac{1}{\gamma}V(\tau)\rb.
\label{eq:pstar}
\end{align}
with the normalization constant $Z= \lla \exp \lb{-\frac{1}{\gamma}V(\tau)} \rb\rra_{p_0}$.
Note that the above is not a trivial statement, as we now take the minimum also over non-Markovian processes with non-Gaussian noise.

The PICE algorithm \cite{kappen2015adaptive}, instead of directly optimizing the expected cost, it minimizes the KL-divergence $KL\lb \pstar || p_{u_\theta} \rb$ which measures the deviation of a parametrized distribution $p_{u_\theta}$ from the optimal one $\pstar$.
Although direct cost optimization and PICE are different methods, their global minimum is the same if the parametrization of $u_\theta$ can express the optimal control $\ustar = u_{\thetastar}$.
The parameters $\thetastar$ of the optimal controller are found using gradient descent:
\begin{align}
\label{eq:PICEgrad}
\nabla_{\theta} KL\lb \pstar || p_{u_\theta} \rb = \frac{1}{Z_{p_{u_\theta}}}\lla \exp\lb -\frac{1}{\gamma}\Sgamma_{p_{u_\theta}}(\tau)\rb \nabla_{\theta} \log p_{u_\theta}(\tau) \rra_{p_{u_\theta}},
\end{align}
where $Z_{p_{u_\theta}} \coloneqq \lla \exp\lb -\frac{1}{\gamma} \Sgamma_{p_{u_\theta}}(\tau)\rb \rra_{p_{u_\theta}}$.

That PICE uses the optimal density as a guidepost for the policy optimization might give it an advantage compared to direct cost-optimization.
In practice however, this method only works properly if the initial guess of the controller $u_\theta$ does not deviate too much from the optimal control, as a high value of $KL\lb \pstar || p_{u_\theta} \rb$ leads to a high variance of the gradient estimator and results in bootstrapping problems of the algorithm \cite{ruiz2017particle,thalmeier2016action}.
In the next section, we introduce a method that interpolates between direct cost-optimization and the PICE method, allowing us to take advantage of the analytical optimal density without being hampered by the same bootstrapping problems as PICE.

\section{Interpolating Between Methods: Smoothing Stochastic Control Problems}
\label{sec:smthecost}
Cost function smoothing was recently introduced as a way to speed up optimization of neural networks~\cite{chaudhari2017deep}:
Optimization of a general cost function $f(\theta)$ can be speeded up by smoothing $f(\theta)$ using an inf-convolution with a distance kernel $d(\theta',\theta)$.
The smoothed function
\begin{align}
\Jalpha(\theta) = \inf_{\theta'} \alpha d(\theta',\theta) + f(\theta')
\label{eq:smoothing}
\end{align}
preserves the global minima of the function $f(\theta)$.
To apply gradient descent based optimization on $\Jalpha(\theta)$ instead of $f(\theta)$ may significantly speed up convergence \cite{chaudhari2017deep}.

We want to use this accelerative effect to find the optimal parametrization of the controller $u_\theta$.
Therefore, we smooth the cost function $\Cgamma(p_{u_\theta})$ as a function of the parameters $\theta$.
As $\Cgamma(p_{u_\theta})=\lla V(\tau) \rra_{p_{u_\theta}} + \gamma KL(p_{u_\theta}||p_0)$ is a functional on the space of probability distributions $p_{u_\theta}$, the natural ``distance''   is the KL-divergence $KL(p_{u_{\theta'}}||p_{u_\theta})$.
So we replace 
\begin{align*}
f(\theta) &\ra \Cgamma(p_{u_\theta})\\
d(\theta',\theta) &\ra KL(p_{u_{\theta'}}||p_{u_\theta}),
\end{align*}
and obtain the smoothed cost $\Jalpha(\theta)$ as
\begin{align}
\Jalpha(\theta) &= \inf_{\theta'} \alpha KL(p_{u_{\theta'}}||p_{u_\theta}) + \Cgamma(p_{u_{\theta'}})=\inf_{\theta'} \alpha KL(p_{u_{\theta'}}||p_{u_\theta}) + \gamma KL(p_{u_{\theta'}}||p_0) + \lla V(\tau) \rra_{p_{u_{\theta'}}}. 
\label{eq:smoothedcost1}
\end{align}
Note the different roles of $\alpha$ and $\gamma$: the parameter $\alpha$ penalizes the deviation of $p_{u_{\theta'}}$ from $p_{u_\theta}$, while the parameter $\gamma$ penalizes the deviation of $p_{u_{\theta'}}$ from the uncontrolled dynamics $p_0$.

\textbf{$-$ Computing the smoothed cost and its gradient}: The smoothed cost $\Jalpha$ is expressed as a minimization problem that has to be solved. Here we show that for Path Integral control problems this can be done analytically.
To do this we first show that we can replace $\inf_{\theta'} \ra \inf_{p'}$ and then solve the minimization over $p'$ analytically.
We replace the minimization over $\theta'$ by a minimization over $p'$ in two steps:
first we state an assumption that allows us to replace $\inf_{\theta'} \ra \inf_{u'}$ and then proof that for Path Integral control problems we can replace $\inf_{u'} \ra \inf_{p'}$.

We assume that for every ${u_\theta}$ and any $\alpha>0$, the minimizer $\theta^*_{\alpha,{\theta}}$ over the parameter space
\begin{align}
\theta^*_{\alpha,{\theta}} \coloneqq \argmin_{\theta'} \alpha KL(p_{u_{\theta'}}||p_{u_\theta}) + \Cgamma(p_{u_{\theta'}}) \label{eq:optimthetaalpha}
\end{align}
is the parametrization of the minimizer $u^*_{\alpha,{\theta}}$ over the function space
\begin{align*}
u^*_{\alpha,{\theta}} \coloneqq \argmin_{u'} \alpha KL(p_{u'}||p_{u_\theta}) + \Cgamma(p_{u'}),
\end{align*}
such that $u^*_{\alpha,{\theta}} \equiv u_{\theta^*_{\alpha,{\theta}}}$.
We call this assumption \emph{full parametrization}.
Naturally it is sufficient for full parametrization if $\myvec{u}_\theta(\myvec{x},t)$ is a universal function approximator with a fully observable state space $\myvec{x}$ and the time $t$ as input, although this may be difficult to achieve in practice.
With this assumption we can replace $\inf_{\theta'} \ra \inf_{u'}$. Analogously, we replace $\inf_{u'} \ra \inf_{p'}$: 
in App.~\ref{ap:5} we proof that for Path Integral control problems the minimizer $u^*_{\alpha,{\theta}}$ over the function space induces the minimizer $p^*_{\alpha,{\theta}}$ over the space of probability distributions
\begin{align}
p^*_{\alpha,{\theta}} \coloneqq &  \argmin_{p'} \alpha KL(p'||p_{u_\theta}) + \Cgamma(p'), \label{eq:smoothedcostsol}
\end{align}
such that $p^*_{\alpha,{\theta}} \equiv p_{u^*_{\alpha,{\theta}}}$. This step is similar to the derivation of
of Eq.~\eqref{eq:pstar} in Section \ref{sec:PI}, but now we have added an additional term $\alpha KL(p_{u'}||p_{u_\theta})$.

Hence, given a Path Integral control problem and a controller $u_\theta$ that satisfies full parametrization, we can replace $\inf_{\theta'} \ra \inf_{p'}$ and Eq.~\eqref{eq:smoothedcost1} becomes
\begin{align}
\Jalpha(\theta) = \inf_{p'}  \alpha KL(p'||p_{u_\theta}) + \gamma KL(p'||p_0) + \lla V(\tau) \rra_{p'}.
\label{eq:smoothedcost1a}
\end{align}
This can be solved directly:
first we compute the minimizer (see App.~\ref{ap:2})
\begin{align}
p^*_{\alpha,{\theta}}(\tau) =& \frac{1}{Z_{p_{u_\theta}}^\alpha} p_{u_\theta}(\tau) \exp \lb -\frac{1}{\gamma+\alpha} \Sgamma_{p_{u_\theta}}(\tau) \rb,
& Z_{p_{u_\theta}}^\alpha&=\lla \exp \lb -\frac{1}{\gamma+\alpha} \Sgamma_{p_{u_\theta}}(\tau) \rb \rra_{p_{u_\theta}}.
\label{eq:annealingcurve}
\end{align}
We plug this back in Eq.~\eqref{eq:smoothedcost1a} and get the smoothed cost and its gradient (see App.~\ref{ap:smoothedgrad})
\begin{align}
\Jalpha(\theta) &= -\lb \gamma+\alpha\rb \log \lla \exp \lb -\frac{1}{\gamma+\alpha} \Sgamma_{p_{u_\theta}}(\tau) \rb \rra_{p_{u_\theta}}
\label{eq:smoothedcost2}\\
\nabla_{\theta}\Jalpha(\theta) &=  -\frac{\alpha}{Z_{p_{u_\theta}}^\alpha}\lla \exp \lb -\frac{1}{\gamma+\alpha} \Sgamma_{p_{u_\theta}}(\tau) \rb \nabla_{\theta} \log p_{u_\theta}(\tau) \rra_{p_{u_\theta}}.
\label{eq:smoothedgradient}
\end{align}
Both can be estimated by samples from the distribution $p_{u_\theta}$.

\section{The ASPIC Algorithm}
\label{sec:algorithm}\label{sec:adaptivesmoothing}
In this section, we derive an iterative algorithm that takes a parametrized control function $u_\theta$ and performs smooth parameter updates starting from initial parameters $\theta_0$.
We focus on the effect that a finite $\alpha>0$ has on the iterative optimization of the control $u_\theta$ for a fixed value of $\gamma$.
For our theoretical results, we refer the reader to App.~\ref{sec:alphagamma}, where we identify several existing settings as limiting cases of the parameters $\alpha$ and $\gamma$, and to App.~\ref{sec:Theory}, where we proof that smooth updates are optimal in two-step sequential decision problems.

To simplify notation, we overload $p_{u_\theta}\ra\theta$ so that we get $\Cgamma(p_{u_\theta})\ra \Cgamma(\theta)$ and $KL(p_{u_{\theta'}}||p_{u_{\theta}}) \ra KL(\theta'||\theta)$.
We use a trust region constraint to robustly optimize the policy, c.f.,~\cite{peters2010relative, schulman2015trust, gomez2014policy}. 
We define the smoothed update with stepsize $\mathcal{E}$ as an update $\theta \ra \theta'$ 
with 
$\theta' = \RJE(\theta)$
and
\begin{align}
\RJE(\theta) \coloneqq& \argmin_{
\substack{
\theta' \\
\text{s.t.}~KL(\theta'||\theta)\leq \mathcal{E}
}
} \Jalpha(\theta'). \label{eq:smoothedupdate}
\end{align}

\textbf{$-$Smoothed and direct updates using natural gradients}: 
We first express the constraint optimization \eqref{eq:smoothedupdate} as an unconstrained optimization problem introducing a Lagrange multiplier $\beta$
\begin{align}
\label{eq:unconstraintargminf}
\theta_{n+1} &= \argmin_{\theta'} \Jalpha(\theta') + \beta KL(\theta'||\theta_{n}).
\end{align}
Following \cite{schulman2015trust} we assume that the trust region size $\mathcal{E}$ is small.
For $\mathcal{E}\ll 1$ we get  $\beta \gg 1$ and can expand $\Jalpha(\theta')$ to first and $KL(\theta'||\theta_{n})$ to second order (see App.~\ref{ap:TRPOTRPI} for the details). This gives
\begin{align}
\label{eq:natgradupdate}
\theta_{n+1}&= \theta_n - \beta^{-1}F^{-1}\left.\nabla_{\theta'}\Jalpha(\theta')\right\rvert_{\theta'=\theta_n},
\end{align}
a natural gradient update with the Fisher-matrix $F = \left.\nabla_\theta \nabla_\theta^T KL(\theta'||\theta_{n})\right\rvert_{\theta'=\theta_n}$
(we use the conjugate gradient method to approximately compute the natural gradient for high dimensional parameter spaces.
See App.~\ref{ap:4} or \cite{schulman2015trust}  for details).
Parameter $\beta$ is determined using a line search such that
\begin{align}\label{eq:eqtrust}
KL(\theta_n||\theta_{n+1})=\mathcal{E}.
\end{align}
Note that for direct updates this derivation is the same, just replace $\Jalpha$ by $\Cgamma$.

\textbf{$-$Reliable gradient estimation using adaptive smoothing}:
To compute smoothed updates using Eq.~\eqref{eq:natgradupdate} we need the gradient of the smoothed cost.
We assume full parametrization and use Eq.~\eqref{eq:smoothedgradient}, which can be estimated using $N$ weighted samples drawn 
from the distribution $p_{u_{\theta}}$:
\begin{align}
\nabla_{\theta}\Jalpha(\theta) \approx \alpha\sum_{i=1}^N w^i \log p_{u_\theta}(\tau^i). \label{eq:Jgradsamples}
\end{align}
The weights are given by
\begin{align*}
w^i & = \frac{1}{\tilde{Z}}\exp \lb -\frac{1}{\mybar{\gamma}+\alpha} \Sgamma_{p_{u_\theta}}(\tau^i) \rb,
& \tilde{Z} & =\sum_{i=1}^N \exp \lb -\frac{1}{\mybar{\gamma}+\alpha} \Sgamma_{p_{u_\theta}}(\tau^i) \rb.
\end{align*}
The variance of this estimator depends sensitively on the entropy of the weights
$H_N(w) = -\sum_{i=1}^N w^i \log(w^i)$. If the entropy is low, the total weight is concentrated on a few particles. This results in a poor gradient estimator where only a few of the particles actually contribute.
This concentration is dependent on the smoothing parameter $\alpha$: for small $\alpha$, the weights are very concentrated in a few samples, resulting in a large weight-entropy and thus a high variance of the gradient estimator.
As small $\alpha$ corresponds to strong smoothing, we want $\alpha$ to be as small as possible, but large enough to allow a reliable gradient estimation.
Therefore, we set a bound to the weight entropy $H_N(w)$.
To get a bound that is independent of the number of samples $N$, we use that in the limit of $N\ra\infty$ the weight entropy is monotonically related to the KL-Divergence $KL(p^*_{\alpha,u_\theta}||p_{u_\theta})$
\begin{align*}
KL(p^*_{\alpha,u_\theta}||p_{u_\theta})
=\lim_{N\ra \infty} \log N - H_N(w)
\end{align*}
(see App.~\ref{ap:3}).
This provides a method for choosing $\alpha$ \emph{independently of the number of samples}: we set the constraint ${KL(p^*_{\alpha,u_\theta}||p_{u_\theta})\leq\Delta}$ and determine the smallest $\alpha$ that satisfies this condition by using a line search.
Large values of $\Delta$ correspond to small values of $\alpha$ (see App.~\ref{ap:deltaalpha}) and therefore strong smoothing, we thus call parameter $\Delta$ the \emph{smoothing strength}.

\textbf{$-$A model-free algorithm}: We can compute the gradient \eqref{eq:Jgradsamples} and the KL-divergence while treating the dynamical system as a black-box. For this we write the probability distribution $p_{u_\theta}$ over trajectories $\tau$ as a Markov process $p_{u_\theta}(\tau)= \prod_{0<t<T} p_{u_\theta}(\myvec{x}_{t+dt}|\myvec{x}_{t},t)$, where the product runs over the time~$t$, which is discretized with time step $dt$.
We define the noisy action $\myvec{a}_t=\myvec{u}(\myvec{x}_t,t) + \myvec{\xi}_t$ and formulate the transitions $p_{u_\theta}(\myvec{x}_{t+dt}|\myvec{x}_{t})$ for the dynamical system \eqref{eq:dyneq} as
\begin{align*}
p_{u_\theta}(\myvec{x}_{t+dt}|\myvec{x}_{t}) &= \delta \lb\myvec{x}_{t+dt}-\mathcal{F}\lb \myvec{x}_{t},\myvec{a}_t,t \rb\rb \cdot \pi_\theta(\myvec{a}_t|t,\myvec{x}_{t}),
\end{align*}
with $\delta(\cdot)$ the Dirac delta function.
This splits the transitions up into the deterministic dynamical system $\mathcal{F}\lb \myvec{x}_{t},\myvec{a}_t,t \rb$
 and a Gaussian policy $\pi_\theta(\myvec{a}_t|t,\myvec{x}_{t})=\mathcal{N}\lb \myvec{a}_t| \myvec{u}_\theta(\myvec{x}_t,t), \frac{\nu}{dt}\rb$ with mean $\myvec{u}_\theta(\myvec{x}_t,t)$ and variance $\frac{\nu}{dt}$.
Using this we get a simplified expression for the gradient of the smoothed cost \eqref{eq:Jgradsamples} that is independent of the system dynamics, given the samples drawn from the controlled system $p_{u_\theta}$:
\begin{align*}
\nabla_{\theta}\Jalpha(\theta) \approx \alpha\sum_{i=1}^N \sum_{0<t<T} w^i \nabla_{\theta}\log \pi_\theta(\myvec{a}_t^i|t,\myvec{x}^i_{t}).
\end{align*}
Similarly we obtain an expression for the estimator of the KL divergence 
$KL(\theta_n||\theta_{n+1}) \approx \frac{1}{N}\sum_{i=1}^N \sum_{0<t<T} \log \frac{\pi_{\theta_n}(\myvec{a}_t^i|t,\myvec{x}^i_{t})}{\pi_{\theta_{n+1}}(\myvec{a}_t^i|t,\myvec{x}^i_{t})}$.
With this we formulate ASPIC (Algorithm \ref{alg:alg1}) which optimizes the parametrized policy $\pi_\theta$ by iteratively drawing samples from the controlled system. 

\section{Numerical Experiments}
\label{sec:Numerics}

\begin{figure}[t]
\vskip 0.2in
\begin{center}
\includegraphics[width=.325\columnwidth]{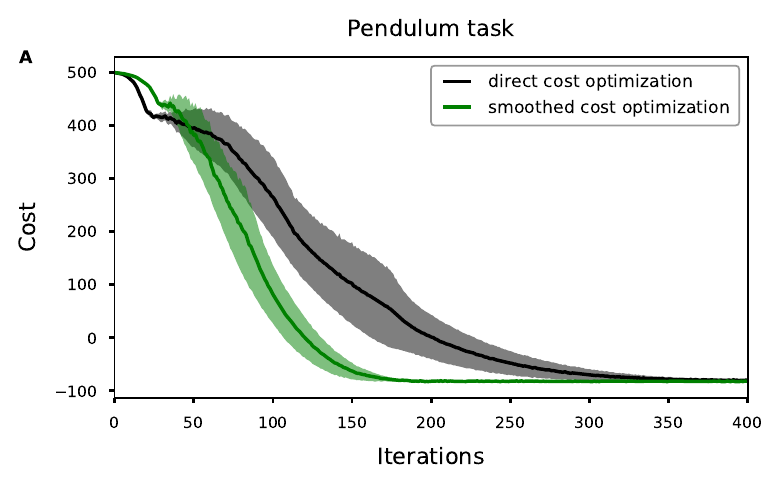} 
\includegraphics[width=.325\columnwidth]{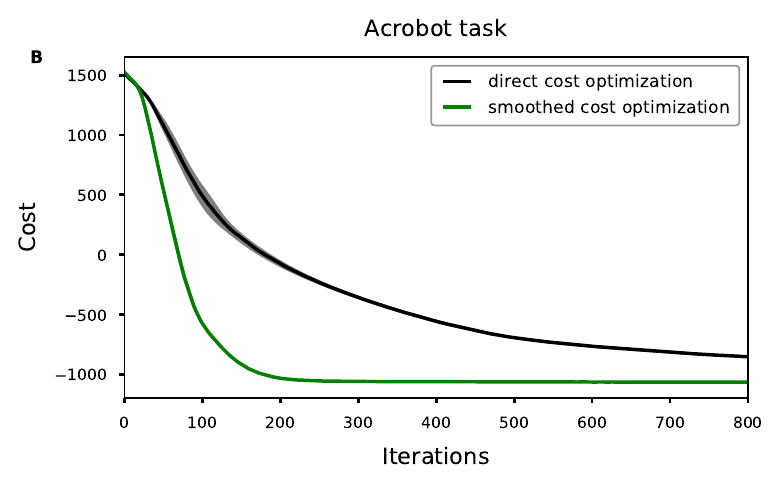}
\includegraphics[width=.325\columnwidth]{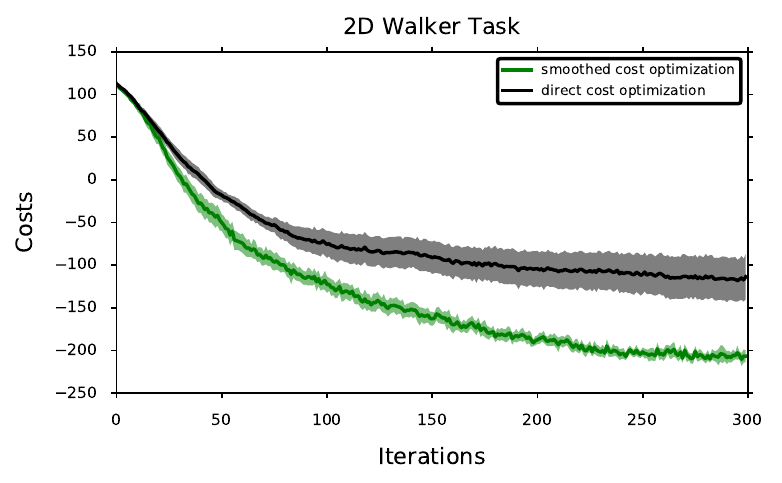}

\caption{
Smoothed cost-optimization exhibits faster convergence than direct cost-optimization in a variety of tasks.
Plots show mean and standard deviation of the cost per iteration for $10$ runs of the algorithm.
For pendulum and acrobot tasks, we used $\Delta = 0.5$ and $\mathcal{E}=0.1$ whereas for the walker, we used $\Delta=0.05\log N$ and $\mathcal{E}=0.01$. 
See App.~\ref{ap:addexp} for more details.
}
\label{fig:exp2}
\end{center}
\vskip -0.2in
\end{figure} 

We compare experimentally the convergence speed of policy optimization with and without smoothing.
For the optimization with smoothing, we use ASPIC and for the optimization without smoothing, we use a version of ASPIC where we replaced the gradient of the smoothed cost with the gradient of the cost itself. 
We consider three non-linear control problems, which violate the full parametrization assumption (pendulum swing-up task, Acrobot, and 2D walker).
The latter was simulated using OpenAI gym~\cite{1606.01540}.
For pendulum swing-up and the Acrobot tasks we used time-varying linear feedback controllers, whereas for the 2D walker task we parametrized the control $u_\theta$ using a neural network.
We provide more details about the experimental settings and additional results in App.~\ref{ap:addexp}.

\textbf{$-$Convergence rate of policy optimization}:
Fig.~\ref{fig:exp2} shows the comparison of ASPIC algorithm with smoothing against direct-cost optimization. In all three tasks, smoothing improves the convergence rate of policy optimization. Smoothed cost optimization requires less iterations to achieve the same cost reduction as direct cost-optimization, with only a negligible amount of additional computational steps that do not depend on the complexity of the simulation runs.

We can thus conclude that even in cases when the parametrized controller does not strictly meet the requirement of full parametrization needed to derive the gradient of the smoothed cost, a strong performance boost can also be achieved.

\section{Discussion}
Many policy optimization algorithms update the control policy based on a direct optimization of the cost; examples are Trust Region Policy Optimization (TRPO)~\cite{schulman2015trust} or the Path-Integral Relative Entropy Policy Search (PIREPS) \cite{gomez2014policy}, where the later is particularly developed for Path Integral control problems.
The main novelty of this work is the application of the idea of smoothing as introduced in \cite{chaudhari2017deep} to Path Integral control problems.
This allows to outperform direct cost-optimization and achieve faster convergence rates with only a negligible amount of computational overhead.

This procedure bears similarities to an adaptive annealing scheme, with the smoothing parameter playing the role of an artificial temperature.
However in contrast to classical annealing schemes, such as simulated annealing, changing the smoothing parameter does not change the optimization target: the minimum of the smoothed cost remains the optimal control solution for all levels of smoothing.

In the weak smoothing limit, ASPIC directly optimizes the cost using trust region constrained updates, similar to the TRPO algorithm \cite{schulman2015trust}.
TRPO differs from ASPIC's weak smoothing limit by additionally using certain variance reduction techniques for the gradient estimator: They replace the stochastic cost in the gradient estimator by the easier-to-estimate advantage
function, which has a state dependent baseline and only takes into account future expected cost. 
Since this depends on the linearity of the gradient in the stochastic cost and this dependence is non-linear for the gradient of the smoothed cost, 
we cannot directly incorporate these variance reduction techniques in ASPIC.

In the strong smoothing limit ASPIC becomes a version of PICE \cite{kappen2015adaptive} that---unlike the plain PICE algorithm---uses a trust region constraint to achieve robust updates. 
The gradient estimation problem that appears in the PICE algorithm was previously addressed in \cite{ruiz2017particle}: they proposed a heuristic that allows to reduce the variance of the gradient estimator by adjusting the particle weights used to compute the policy gradient. 
In \cite{ruiz2017particle} this heuristic is introduced as an ad hoc fix of the sampling problem and the adjustment of the weights introduces a bias with possible unknown side effects.
Our study sheds a new light on this, as adjusting the particle weights corresponds to a change of the smoothing parameter in our case.

\subsubsection*{Acknowledgments}

The research leading to these results has received funding from ``La Caixa'' Banking Foundation. Vicen\c{c} G\'omez is supported by the Ramon y Cajal program RYC-2015-18878 (AEI/MINEICO/FSE,UE) and the Mar\'ia de Maeztu Units of Excellence Programme (MDM-2015-0502).

\bibliography{mybib}{}

\begin{thebibliography}{10}

\bibitem{bierkens2014explicit}
J.~Bierkens and H.~J. Kappen.
\newblock Explicit solution of relative entropy weighted control.
\newblock {\em Systems \& Control Letters}, 72:36--43, 2014.

\bibitem{1606.01540}
G.~Brockman, V.~Cheung, L.~Pettersson, J.~Schneider, J.~Schulman, J.~Tang, and
  W.~Zaremba.
\newblock Openai gym, 2016.

\bibitem{chaudhari2017deep}
P.~Chaudhari, A.~Oberman, S.~Osher, S.~Soatto, and G.~Carlier.
\newblock Deep relaxation: partial differential equations for optimizing deep
  neural networks.
\newblock {\em Research in the Mathematical Sciences}, 5(3):30, 2018.

\bibitem{duan2016benchmarking}
Y.~Duan, X.~Chen, R.~Houthooft, J.~Schulman, and P.~Abbeel.
\newblock Benchmarking deep reinforcement learning for continuous control.
\newblock In {\em Proceedings of the 33rd International Conference on Machine
  Learning}, 2016.

\bibitem{fleming2002risk}
W.~Fleming and S.~Sheu.
\newblock Risk-sensitive control and an optimal investment model ii.
\newblock {\em Annals of Applied Probability}, pages 730--767, 2002.

\bibitem{fleming1995risk}
W.~H. Fleming and W.~M. McEneaney.
\newblock Risk-sensitive control on an infinite time horizon.
\newblock {\em SIAM Journal on Control and Optimization}, 33(6):1881--1915,
  1995.

\bibitem{glorot2010understanding}
X.~Glorot and Y.~Bengio.
\newblock Understanding the difficulty of training deep feedforward neural
  networks.
\newblock In {\em Thirteenth International Conference on Artificial
  Intelligence and Statistics}, pages 249--256, 2010.

\bibitem{gomez2014policy}
V.~G{\'o}mez, H.~J. Kappen, J.~Peters, and G.~Neumann.
\newblock Policy search for path integral control.
\newblock In {\em Joint European Conference on Machine Learning and Knowledge
  Discovery in Databases}, pages 482--497. Springer, 2014.

\bibitem{heess2017emergence}
N.~Heess, S.~Sriram, J.~Lemmon, J.~Merel, G.~Wayne, Y.~Tassa, T.~Erez, Z.~Wang,
  A.~Eslami, M.~Riedmiller, et~al.
\newblock Emergence of locomotion behaviours in rich environments.
\newblock {\em arXiv preprint arXiv:1707.02286}, 2017.

\bibitem{kappen_prl05}
H.~J. Kappen.
\newblock Linear theory for control of nonlinear stochastic systems.
\newblock {\em Physical Review Letters}, 95(20):200201, 2005.

\bibitem{KappenML2012}
H.~J. Kappen, V.~G\'omez, and M.~Opper.
\newblock Optimal control as a graphical model inference problem.
\newblock {\em Machine Learning}, 87(2):159--182, 2012.

\bibitem{kappen2015adaptive}
H.~J. Kappen and H.~C. Ruiz.
\newblock Adaptive importance sampling for control and inference.
\newblock {\em Journal of Statistical Physics}, 162(5):1244--1266, 2016.

\bibitem{kwakernaak1972linear}
H.~Kwakernaak and R.~Sivan.
\newblock {\em Linear optimal control systems}, volume~1.
\newblock Wiley-Interscience New York, 1972.

\bibitem{martens2014new}
J.~Martens.
\newblock New insights and perspectives on the natural gradient method.
\newblock {\em arXiv preprint arXiv:1412.1193}, 2014.

\bibitem{mnih2015human}
V.~Mnih, K.~Kavukcuoglu, D.~Silver, A.~A. Rusu, J.~Veness, M.~G. Bellemare,
  A.~Graves, M.~Riedmiller, A.~K. Fidjeland, G.~Ostrovski, et~al.
\newblock Human-level control through deep reinforcement learning.
\newblock {\em Nature}, 518(7540):529--533, 2015.

\bibitem{peters2010relative}
J.~Peters, K.~M\"{u}lling, and Y.~Alt\"{u}n.
\newblock Relative entropy policy search.
\newblock In {\em Proceedings of the Twenty-Fourth AAAI Conference on
  Artificial Intelligence}, pages 1607--1612. AAAI Press, 2010.

\bibitem{peters2008reinforcement}
J.~Peters and S.~Schaal.
\newblock Reinforcement learning of motor skills with policy gradients.
\newblock {\em Neural networks}, 21(4):682--697, 2008.

\bibitem{ruiz2017particle}
H.-C. Ruiz and H.~J. Kappen.
\newblock Particle smoothing for hidden diffusion processes: Adaptive path
  integral smoother.
\newblock {\em IEEE Transactions on Signal Processing}, 65(12):3191--3203,
  2017.

\bibitem{schulman2015trust}
J.~Schulman, S.~Levine, P.~Abbeel, M.~Jordan, and P.~Moritz.
\newblock Trust region policy optimization.
\newblock In {\em Proceedings of the 32nd International Conference on Machine
  Learning}, volume~37, pages 1889--1897. PMLR, 2015.

\bibitem{spong1995swing}
M.~W. Spong.
\newblock The swing up control problem for the acrobot.
\newblock {\em IEEE control systems}, 15(1):49--55, 1995.

\bibitem{thalmeier2016action}
D.~Thalmeier, V.~G{\'o}mez, and H.~J. Kappen.
\newblock Action selection in growing state spaces: Control of network
  structure growth.
\newblock {\em Journal of Physics A: Mathematical and Theoretical},
  50(3):034006, 2016.

\bibitem{thijssen2015path}
S.~Thijssen and H.~J. Kappen.
\newblock Path integral control and state-dependent feedback.
\newblock {\em Physical Review E}, 91(3):032104, 2015.

\bibitem{broek2012risk}
B.~van~den Broek, W.~Wiegerinck, and H.~Kappen.
\newblock Risk sensitive path integral control.
\newblock In {\em Proceedings of the Twenty-Sixth Conference on Uncertainty in
  Artificial Intelligence}, pages 615--622. AUAI Press, 2010.

\bibitem{williams1992simple}
R.~J. Williams.
\newblock Simple statistical gradient-following algorithms for connectionist
  reinforcement learning.
\newblock {\em Machine learning}, 8(3-4):229--256, 1992.

\end{thebibliography}
\bibliographystyle{abbrv}

\begin{appendix}

\section{Derivation of the policy gradient}
\label{ap:1}
Here we derive Eq.~\eqref{eq:reinforcegrad}. We write  $\Cgamma(p_{u_\theta})=\lla \Sgamma_{u_\theta}(\tau) \rra_{p_{u_\theta}}$, with $\Sgamma_{u_\theta}(\tau)\coloneqq V(\tau) + \gamma \log\frac{p_{u_\theta}(\tau)}{p_0(\tau)}$ and take the derivative of Eq.~\eqref{eq:KLform}:
\begin{align}
\nabla_{\theta}\lla \Sgamma_{u_\theta}(\tau) \rra_{p_{u_\theta}} &=
 \nabla_{\theta}\lla V(\tau) + \gamma \log\frac{p_{u_\theta}(\tau)}{p_0(\tau)} \rra_{p_{u_\theta}}
\end{align}
Now we introduce the importance sampler $p_{u_{\theta'}}$ and correct for it. 
\begin{align}
\nabla_{\theta}\lla \Sgamma_{u_\theta}(\tau) \rra_{p_{u_\theta}}&= \nabla_{\theta}\lla \frac{p_{u_{\theta}}(\tau)}{p_{u_{\theta'}}(\tau)} \lb V(\tau) + \gamma \log\frac{p_{u_\theta}(\tau)}{p_0(\tau)}\rb \rra_{p_{u_{\theta'}}}
\end{align}
This is true for all $\theta'$ as long as $p_{u_{\theta}}(\tau)$ and $p_{u_{\theta'}}(\tau)$ are absolutely continuous to each other.
Taking the derivative we get:
\begin{align}
&= \lla \frac{\nabla_{\theta} p_{u_{\theta}}(\tau)}{p_{u_{\theta'}}(\tau)} \lb V(\tau) + \gamma \log\frac{p_{u_\theta}(\tau)}{p_0(\tau)}\rb \rra_{p_{u_{\theta'}}} + \lla \frac{p_{u_{\theta}}(\tau)}{p_{u_{\theta'}}(\tau)} \lb \gamma \frac{1}{p_{u_\theta}(\tau)}\nabla_{\theta}p_{u_\theta}(\tau)\rb \rra_{p_{u_{\theta'}}} \\
&= \lla \lb\nabla_{\theta} \log p_{u_{\theta}}(\tau)\rb \lb V(\tau) + \gamma \log\frac{p_{u_\theta}(\tau)}{p_0(\tau)}\rb \rra_{p_{u_{\theta}}} + \gamma \nabla_{\theta}\lla \frac{1}{p_{u_{\theta'}}(\tau)}  p_{u_\theta}(\tau) \rra_{p_{u_{\theta'}}} \\
&= \lla  \Sgamma_{u_\theta}(\tau) \nabla_{\theta} \log p_{u_{\theta}}(\tau) \rra_{p_{u_{\theta}}} + \gamma \nabla_{\theta}\lla 1 \rra_{p_{u_{\theta}}} \\
&= \lla  \Sgamma_{u_\theta}(\tau) \nabla_{\theta} \log p_{u_{\theta}}(\tau) \rra_{p_{u_{\theta}}}.
\end{align}

\section{Smoothing Stochastic Control Problems}
\label{app:smoothing}

\subsection{Replacing Minimization over $u$ by Minimization over $p'$}
\label{ap:5}
Here we show that for
\begin{align}
\Jalpha(\theta) = \inf_{u'}  \alpha KL(p_{u'}||p_{u_\theta}) + \gamma KL(p_{u'}||p_0) + \lla V(\tau) \rra_{p'}
\label{apeq:jminu}
\end{align}
we can replace the minimization over $u$ by a minimization over $p'$ to obtain Eq.~\eqref{eq:smoothedcost1a}.
For this, we need to show that the minimizer $p_{\alpha,\theta}^*$ of Eq.~\eqref{eq:smoothedcost1a} is induced by $u_{\alpha,\theta}^*$, the minimizer of Eq.~\eqref{apeq:jminu}:
\begin{align*}
p_{\alpha,\theta}^* \equiv p_{u_{\alpha,\theta}^*}.
\end{align*}
The solution to \eqref{eq:smoothedcost1a} is given by (see App.~\ref{ap:2})
\begin{align}
p^*_{\alpha,\theta}&=\frac{1}{Z} p_{u_\theta}(\tau) \exp \lb -\frac{1}{\gamma+\alpha} \Sgamma_{p_{u_\theta}}(\tau) \rb = \frac{1}{Z} p_{u_\theta}(\tau)\lb\frac{p_0(\tau)}{p_{u_\theta}(\tau)}\rb^\frac{\gamma}{\gamma+\alpha} \exp \lb -\frac{1}{\gamma+\alpha} V(\tau) \rb.
\end{align}
We rewrite
\begin{align*}
p_{0}(\tau)\lb\frac{p_{u_\theta}(\tau)}{p_0(\tau)}\rb^{1-\frac{\gamma}{\gamma+\alpha}} &= p_0(\tau)\exp\lb \lb1-\frac{\gamma}{\gamma+\alpha}\rb\int_0^T \lb\frac{1}{2} \myvec{u}_\theta(\myvec{x}_t,t)^T \myvec{u}_\theta(\myvec{x}_t,t) + \myvec{u}_\theta(\myvec{x}_t,t)^T\myvec{\xi}_t \rb dt\rb,
\end{align*}
where we used the Girsanov theorem~\cite{bierkens2014explicit,thijssen2015path} (and set $\nu=1$ for simpler notation).
With $\myvec{\tilde{u}}_\theta(\myvec{x}_t,t)\coloneqq \lb 1-\frac{\gamma}{\gamma+\alpha} \rb  \myvec{u}_\theta(\myvec{x}_t,t)$ this gives
\begin{align*}
p_{0}(\tau)\lb\frac{p_{u_\theta}(\tau)}{p_0(\tau)}\rb^{1-\frac{\gamma}{\gamma+\alpha}}
&= p_0(\tau)\exp\lb \int_0^T \lb\frac{1}{2} \myvec{\tilde{u}}_\theta(\myvec{x}_t,t)^T \myvec{\tilde{u}}_\theta(\myvec{x}_t,t) + \myvec{\tilde{u}}_\theta(\myvec{x}_t,t)^T\myvec{\xi}_t \rb dt\rb
\cdot \\ &\quad \cdot
\exp\lb \int_0^T \lb\frac{1}{2} \frac{\gamma}{ \alpha}\myvec{\tilde{u}}_\theta(\myvec{x}_t,t)^T \myvec{\tilde{u}}_\theta(\myvec{x}_t,t)  \rb dt\rb \\
&= 
p_{{\tilde{u}}_\theta}(\tau)
\exp\lb \int_0^T \lb\frac{1}{2} \frac{\gamma}{ \alpha}\myvec{\tilde{u}}_\theta(\myvec{x}_t,t)^T \myvec{\tilde{u}}_\theta(\myvec{x}_t,t) \rb dt \rb.
\end{align*}
So we get
\begin{align}
p^*_{\alpha,\theta}
&= \frac{1}{Z} p_{{\tilde{u}}_\theta}(\tau)
\exp\lb \int_0^T \lb\frac{1}{2} \frac{\gamma}{ \alpha}\myvec{\tilde{u}}_\theta(\myvec{x}_t,t)^T \myvec{\tilde{u}}_\theta(\myvec{x}_t,t) \rb dt \rb \exp \lb -\frac{1}{\gamma+\alpha} V(\tau) \rb.
\end{align}
This has the form of an optimally controlled distribution with dynamics 
\begin{align}
\label{eq:appendixBdynsys}
\dot{\myvec{x}}_t=\myvec{f}(\myvec{x}_t,t) + g(\myvec{x}_t,t)\left( \myvec{\tilde{u}}_\theta(\myvec{x}_t,t)+\myvec{\hat{u}}(\myvec{x}_t,t) + \myvec{\xi}_t \right)
\end{align}
and cost
\begin{align}
\lla \int_0^T \frac{1}{\gamma+\alpha}V(\myvec{x}_t,t) -\frac{1}{2} \frac{\gamma}{ \alpha}\myvec{\tilde{u}}_\theta(\myvec{x}_t,t)^T \myvec{\tilde{u}}_\theta(\myvec{x}_t,t)  dt+\int_0^T \lb\frac{1}{2} \myvec{\hat{u}}(\myvec{x}_t,t)^T \myvec{\hat{u}}(\myvec{x}_t,t) + \myvec{\hat{u}}(\myvec{x}_t,t)^T\myvec{\xi}_t\rb dt \rra_{p_{\hat{u}}}.
\end{align}
This is a Path Integral control problem with state cost $\int_0^T \frac{1}{\gamma+\alpha}V(\myvec{x}_t,t)- \frac{1}{2} \frac{\gamma}{ \alpha}\myvec{\tilde{u}}_\theta(\myvec{x}_t,t)^T \myvec{\tilde{u}}_\theta(\myvec{x}_t,t)  dt$ which is well defined with $\myvec{\tilde{u}}_\theta(\myvec{x}_t,t)=\lb 1-\frac{\gamma}{\gamma+\alpha} \rb  \myvec{u}_\theta(\myvec{x}_t,t)$.

Let $\hat{u}^*$ be the optimal control of this Path Integral control problem. Then $p^*_{\alpha,\theta}$ is induced by Eq.~ \eqref{eq:appendixBdynsys} with $\hat{u} = \hat{u}^*$. This is equivalent to say that $p^*_{\alpha,\theta}$ is induced by Eq.~ \eqref{eq:dyneq}. As $p^*_{\alpha,\theta}$ is the density that minimizes Eq.~\eqref{eq:smoothedcost1a}, $\tilde{u}_\theta + \hat{u}^*$ is minimizing Eq.~\eqref{apeq:jminu}.

\subsection{Minimizer of smoothed cost}
\label{ap:2}
Here we want to proof Eq.~\eqref{eq:annealingcurve}:
\begin{align}
p^*_{\alpha,\theta}(\tau) \coloneqq &  \argmin_{p'} \alpha KL(p'||p_{u_\theta}) + \lla \Sgamma_{p_{u_\theta}}(\tau) \rra_{p'} \\
=&  \argmin_{p'} \lla \alpha \log\frac{p'(\tau)}{p_{u_\theta}(\tau)} +   V(\tau) + \gamma \log\frac{p'(\tau)}{p_0(\tau)} \rra_{p'}.
\end{align}
For this we take the variational derivative and set it to zero:
\begin{align}
0 &= \left. \frac{\delta}{\delta p'(\tau)}\lla \alpha \log\frac{p'(\tau)}{p_{u_\theta}(\tau)} +   V(\tau) + \gamma \log\frac{p'(\tau)}{p_0(\tau)} +\kappa \rra_{p'} \right\rvert_{p' = p^*_{\alpha,\theta}},
\end{align}
where we added a Lagrange multiplier $\kappa$ to ensure normalization.
We get
\begin{align}
0&= \left. \alpha \log\frac{p'(\tau)}{p_{u_\theta}(\tau)} +   V(\tau) + \gamma \log\frac{p'(\tau)}{p_0(\tau)} +\kappa  \right\rvert_{p' = p^*_{\alpha,\theta}},
\end{align}
from which follows
\begin{align}
p^*_{\alpha,\theta}(\tau) =& \exp\lb \frac{\kappa}{\alpha +\gamma} \rb p_{u_\theta}(\tau)^{\frac{\alpha}{\alpha+\gamma}} p_{0}(\tau)^{\frac{\gamma}{\alpha+\gamma}} \exp \lb -\frac{1}{\gamma+\alpha} V(\tau) \rb \\
&= \exp\lb \frac{\kappa}{\alpha +\gamma} \rb  p_{u_\theta}(\tau) \exp \lb -\frac{1}{\gamma+\alpha} V(\tau) - \frac{\gamma}{\alpha+\gamma} \log \frac{p_{u_\theta}(\tau)}{p_{0}(\tau)}\rb \\
&= \exp\lb \frac{\kappa}{\alpha +\gamma} \rb p_{u_\theta}(\tau) \exp \lb -\frac{1}{\gamma+\alpha} \Sgamma_{p_{u_\theta}}(\tau)\rb,
\end{align}
where $\kappa$ is chosen such that the distribution is normalized.

\subsection{Derivation of the gradient of the smoothed cost function}
\label{ap:smoothedgrad}
Here we derive Eq.~\eqref{eq:smoothedgradient} by taking the derivative of Eq.~\eqref{eq:smoothedcost2}:

\begin{align}
\nabla_{\theta}\Jalpha(\theta) &=
 -\lb \gamma+\alpha\rb \nabla_{\theta} \log \lla \exp \lb -\frac{1}{\gamma + \alpha}\lb V(\tau) + \gamma \log\frac{p_{u_\theta}(\tau)}{p_0(\tau)} \rb\rb \rra_{p_{u_\theta}} \\
&=  -\frac{\gamma + \alpha}{Z_{p_{u_\theta}}^\alpha} \nabla_{\theta} \lla \exp \lb- \frac{1}{\gamma + \alpha}\lb V(\tau) + \gamma \log\frac{p_{u_\theta}(\tau)}{p_0(\tau)} \rb\rb\rra_{p_{u_\theta}}.
\end{align}
Now we introduce the importance sampler $p_{u_{\theta'}}$ and correct for it. 
\begin{align}
\nabla_{\theta}\Jalpha(\theta)
&= -\frac{\gamma + \alpha}{Z_{p_{u_\theta}}^\alpha}  \nabla_{\theta}  \lla \frac{p_{u_{\theta}}(\tau)}{p_{u_{\theta'}}(\tau)} \exp\lb -\frac{1}{\gamma + \alpha}\lb V(\tau) + \gamma \log\frac{p_{u_\theta}(\tau)}{p_0(\tau)} \rb\rb \rra_{p_{u_{\theta'}}} \\
&= -\frac{\gamma + \alpha}{Z_{p_{u_\theta}}^\alpha}  \nabla_{\theta}  \lla \frac{p_{0}(\tau)^\frac{\gamma}{\gamma+\alpha}}{p_{u_{\theta'}}(\tau)}\lb p_{u_{\theta}}(\tau)\rb^{\frac{\alpha}{\gamma + \alpha}}
\exp\lb -\frac{1}{\gamma + \alpha} V(\tau)\rb \rra_{p_{u_{\theta'}}} \\
&= -\frac{\alpha}{Z_{p_{u_\theta}}^\alpha}  \lla \frac{1}{p_{u_{\theta'}}(\tau)}\lb \frac{p_{u_{\theta}}(\tau)}{p_{0}(\tau)}\rb^{-\frac{\gamma}{\gamma + \alpha}}
\exp\lb -\frac{1}{\gamma + \alpha} V(\tau)\rb  \nabla_{\theta} p_{u_{\theta}} \rra_{p_{u_{\theta'}}} \\
&=  -\frac{\alpha}{Z_{p_{u_\theta}}^\alpha}\lla \exp \lb -\frac{1}{\gamma+\alpha} \Sgamma_{p_{u_\theta}}(\tau) \rb \nabla_{\theta} \log p_{u_\theta}(\tau) \rra_{p_{u_\theta}}.
\end{align}

\section{PICE, Direct Cost-Optimization and Risk Sensitivity as Limiting Cases of Smoothed Cost Optimization}
\label{sec:alphagamma}
The smoothed cost and its gradient depend on the two parameters $\alpha$ and $\gamma$, which come from the smoothing Eq.~\eqref{eq:smoothing} and the definition of the control problem \eqref{eq:KLform} respectively.
Although at first glance the two parameters seem to play a similar role, they change different properties of the smoothed cost $\Jalpha(\theta)$ when they are varied.

In the expression for the smoothed cost \eqref{eq:smoothedcost2}, the parameter $\alpha$ only appears in the sum $\gamma + \alpha$.
Varying it changes the effect of the smoothing but leaves the optimum $\thetastar = \argmin_\theta \Jalpha(\theta)$ of the smoothed cost invariant. 
Here we show that smoothing leaves the global optimum of the cost $\Cgamma(p_{u_{\theta}})$ invariant.
As $KL(p_{{u_{\theta'}}}||p_{{u_{\theta}}})\geq 0$ we have that 
\begin{align*}
\Jalpha(\theta) = \inf_{\theta'} \Cgamma(p_{{u_{\theta'}}}) + \alpha KL(p_{{u_{\theta'}}}||p_{{u_{\theta}}}) \geq \inf_{\theta'} \Cgamma(p_{{u_{\theta'}}}) = \Cgamma(p_{{u_{\theta^*}}}).
\end{align*}
To show that the global minimum $\theta^*$ of $\Cgamma$ is also the global minimum of $\Jalpha$, it is thus sufficient to show that 
\begin{align*}
\Jalpha(\theta^*) \leq \Cgamma(p_{{u_{\theta^*}}}).
\end{align*}
We have
\begin{align*}
\Jalpha(\theta^*) = \inf_{\theta'} \Cgamma(p_{{u_{\theta'}}}) + \alpha KL(p_{{u_{\theta'}}}||p_{{u_{\theta^*}}}).
\end{align*}
Using that the minimum of a sum of terms is never larger than the sum of the minimum of terms, we get
\begin{align*}
\Jalpha(\theta^*) &\leq \lb\inf_{\theta'} \Cgamma(p_{{u_{\theta'}}}) \rb + \lb\inf_{\theta'} \alpha KL(p_{{u_{\theta'}}}||p_{{u_{\theta^*}}}) \rb \\
&=  \Cgamma(p_{{u_{\theta^*}}})  + \alpha KL(p_{{u_{\theta^*}}}||p_{{u_{\theta^*}}}) \\
&= \Cgamma(p_{{u_{\theta^*}}}).
\end{align*}

We also expect local maxima to be also preserved for large-enough smoothing parameter $\alpha$.
This would correspond to small time smoothing by the associated Hamilton-Jacobi partial differential equation~\cite{chaudhari2017deep}.

We therefore call $\alpha$ the \emph{smoothing parameter}.
The larger $\alpha$, the weaker the smoothing; in the limiting case $\alpha \ra \infty$, smoothing is turned off as we can see from Eq.~\eqref{eq:smoothedcost2}: for very large $\alpha$, the exponential and the logarithmic function linearise, $\Jalpha(\theta) \ra \Cgamma(p_{u_\theta})$ and we recover direct cost-optimization.
For the limiting case $\alpha \ra 0$, we recover the PICE method: the optimizer $p^*_{\alpha,\theta}$ becomes equal to the optimal density $\pstar$ and the gradient on the smoothed cost \eqref{eq:smoothedgradient} becomes proportional to the PICE gradient \eqref{eq:PICEgrad}:
\begin{align*}
\lim_{\alpha \ra 0} \frac{1}{\alpha} \nabla_{\theta}\Jalpha(\theta) = \nabla_{\theta} KL(\pstar||p_{u_\theta}).
\end{align*}

Varying $\gamma$ changes the control problem and thus its optimal solution. 
For $\gamma \ra 0$, the control cost becomes zero. In this case the cost only consists of the state cost and arbitrary large controls are allowed.
We get
\begin{align*}
\Jalpha(\theta) &= - \alpha \log \lla \exp \lb -\frac{1}{\alpha} V(\tau) \rb \rra_{p_{u_\theta}}.
\end{align*}
This expression is identical to the risk sensitive control cost proposed in \cite{fleming2002risk,fleming1995risk,broek2012risk}. Thus, for $\gamma=0$, the smoothing parameter $\alpha$ controls the risk-sensitivity, resulting in risk seeking objectives for $\alpha>0$ and risk avoiding objectives for $\alpha<0$.  
In the limiting case $\gamma \ra \infty$, the problem becomes trivial; the optimal controlled dynamics becomes equal to the uncontrolled dynamics: $\pstar \ra p_0$, c.f., Eq.~\eqref{eq:pstar}, and $\ustar \ra 0$.

If both parameters $\alpha$ and $\gamma$ are small, the problem is hard (see \cite{ruiz2017particle,thalmeier2016action}) as many samples are needed to estimate the smoothed cost.
The problem becomes feasible if either $\alpha$ or $\gamma$ is increased.
Increasing $\gamma$ however, changes the control problem, while increasing $\alpha$ weakens the effect of smoothing. 

\section{The effect of cost function smoothing on policy optimization}
\label{sec:Theory}

\begin{figure}[t]
\vskip 0.2in
\begin{center}
\centerline{\includegraphics[width=.9\columnwidth]{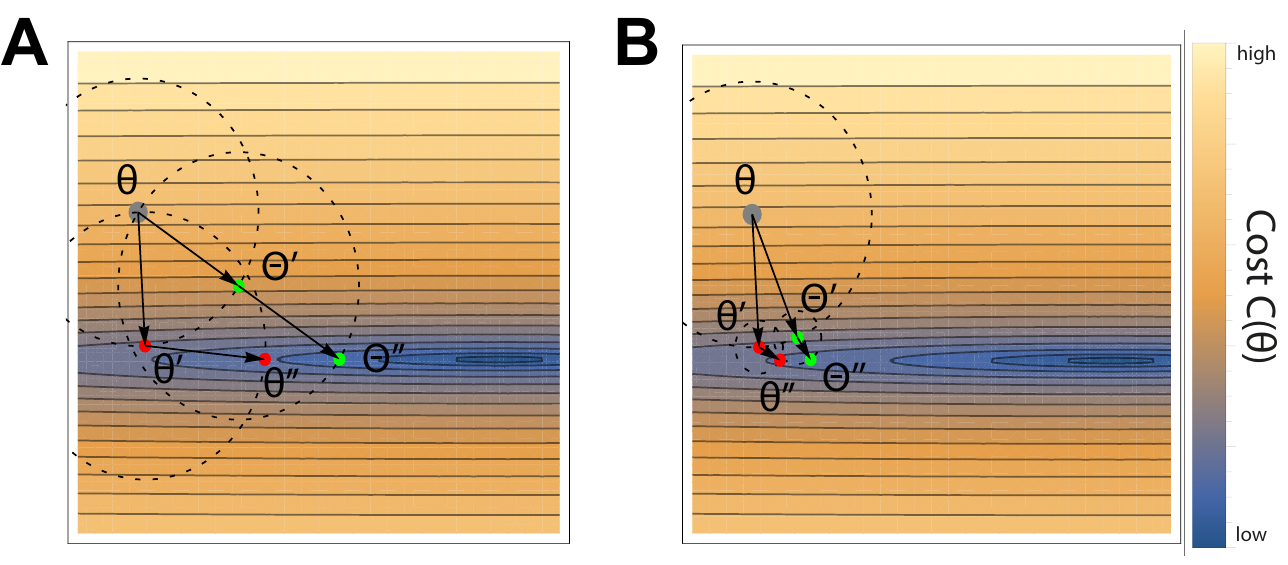}}
\caption{
Illustration of optimal two-step updates compared with two consecutive direct updates.
Illustrated is a two-dimensional cost landscape $\Cgamma(\theta)$ parametrized by $\theta$.
Dark colors represent low cost, while light colors represent high cost.
Green dots indicate the optimal two-step update $\theta \ra \Theta' \ra \Theta''$
while 
red dots indicates two consecutive direct updates $\theta \ra \theta' \ra \theta''$ with $\theta'=\RCE(\theta)$ and $\theta''=\RCD(\theta')$.
The dashed circles indicate trust regions.
$\theta'$, $\theta''$ and $\Theta''$ are the minimizers of the cost in the trust regions around $\theta$, $\theta'$ and $\Theta'$ respectively.
$\Theta'$ is chosen such that the cost $\Cgamma(\Theta'')$ after the subsequent direct update is minimized.
In both panels, the final cost after an optimal two-step update $\Cgamma(\Theta'')$ is smaller than the final cost after two direct updates $\Cgamma(\theta'')$.
(A) Equal sizes of the update steps, $\mathcal{E}=\mathcal{E}'$.
(B) When the size of the second step becomes small $\mathcal{E'}\ll\mathcal{E}$, the smoothed update $\theta \ra \Theta'$ becomes more similar to the direct update $\theta \ra \theta'$.
}
\label{fig0}
\end{center}
\vskip -0.2in
\end{figure}

We introduced smoothing as a way to speed up policy optimization compared to a direct optimization of the cost.
In this section we analyse policy optimization with and without smoothing and show analytically how smoothing can speed up policy optimization.
To simplify notation, we overload $p_{u_\theta}\ra\theta$ so that we get $\Cgamma(p_{u_\theta})\ra \Cgamma(\theta)$ and $KL(p_{u_{\theta'}}||p_{u_{\theta}}) \ra KL(\theta'||\theta)$.

We use a trust region constraint to robustly optimize the policy, c.f.,~\cite{peters2010relative, schulman2015trust, gomez2014policy}. 
There are two options. On the one hand, we can directly optimize the cost $\Cgamma$: 
\begin{mydef}
We define the direct update with stepsize $\mathcal{E}$ as an update $\theta \ra \theta'$ 
with 
$\theta' = \RCE(\theta)$
and
\begin{align}
\RCE(\theta) \coloneqq& \argmin_{
\substack{
\theta' \\
\text{s.t.}~KL(\theta'||\theta)\leq \mathcal{E}
}
} \Cgamma(\theta').
\end{align}
The direct update results in the minimal cost that can be achieved after one single update.
We define the optimal one-step cost
\begin{align*}
C^*_{\mathcal{E}}(\theta) \coloneqq&\min_{
\substack{
\theta' \\
\text{s.t.}~KL(\theta'||\theta)\leq \mathcal{E}
}
} \Cgamma(\theta').
\end{align*}
\end{mydef}
On the other hand we can optimize the smoothed cost $J^\alpha$:
\begin{mydef}
We define the smoothed update with stepsize $\mathcal{E}$ as an update $\theta \ra \theta'$ 
with 
$\theta' = \RJE(\theta)$
and
\begin{align}
\RJE(\theta) \coloneqq& \argmin_{
\substack{
\theta' \\
\text{s.t.}~KL(\theta'||\theta)\leq \mathcal{E}
}
} \Jalpha(\theta'). \label{eq:smoothedupdate2}
\end{align}
\end{mydef}
While a direct update achieves the minimal cost that can be achieved after a single update, we show below that a smoothed update can result in a faster cost reduction if more than one update step is performed.
\begin{mydef}
\label{def:twostep}
We define the optimal two-step update $\theta \ra \Theta' \ra \Theta''$ as an update that results in the lowest cost that can be achieved with a two-step update $\theta \ra \theta' \ra \theta''$ with fixed stepsizes $\mathcal{E}$ and $\mathcal{E}'$ respectively:
\begin{align*}
\Theta', \Theta'' \coloneqq&\argmin_{
\substack{
\theta',\theta'' \\
\text{  s.t.}~KL(\theta''||\theta')\leq \mathcal{E'}\\
\text{      }~KL(\theta'||\theta)\leq \mathcal{E}
}
} \Cgamma(\theta'')
\end{align*}
and the corresponding optimal two-step cost
\begin{align}
C^*_{\mathcal{E},\mathcal{E}'}(\theta) \coloneqq&\min_{
\substack{
\theta' \\
\text{s.t.}~KL(\theta'||\theta)\leq \mathcal{E}
}
}
\min_{
\substack{
\theta''\\
\text{  s.t.}~KL(\theta''||\theta')\leq \mathcal{E'}
}
}
\Cgamma(\theta'')
= &\min_{
\substack{
\theta' \\
\text{s.t.}~KL(\theta'||\theta)\leq \mathcal{E}
}
}
\Cgamma\lb \RCEp(\theta') \rb. \label{eq:ft}
\end{align}
\end{mydef}
In Fig.~\ref{fig0} we illustrate how such an optimal two-step update leads to a faster decrease of the cost than two consecutive direct updates.

\begin{theorem}
\label{th:smooth}
\emph{Statement 1:}
For all $\mathcal{E}$, $\alpha$ there exists an $\mathcal{E}'$, such that a smoothed update with stepsize $\mathcal{E}$ followed by a direct update with stepsize $\mathcal{E}'$ is an optimal two-step update: 
\begin{align*}
\Theta'&=\RJE(\theta) \\
\Theta''&=\RCEp(\Theta')\\
& \\
\Rightarrow \Cgamma\lb \Theta'' \rb &= C^*_{\mathcal{E},\mathcal{E}'}(\theta)
\end{align*}
The size of the second step $\mathcal{E}'$ is a function of $\theta$ and $\alpha$. 

\emph{Statement 2:}
$\mathcal{E}'$ is monotonically decreasing in $\alpha$.
\end{theorem}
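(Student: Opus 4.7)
The plan is to exploit Lagrangian duality between the inner smoothing term $\alpha KL(\theta''||\theta')$ and a hard trust-region constraint $KL(\theta''||\theta')\leq\mathcal{E}'$, and then deduce monotonicity from a concavity property of the optimal smoothed value. I would start by unfolding the definition of $\Jalpha$ to rewrite the smoothed update as a joint minimization:
\begin{align*}
\min_{\substack{\theta' \\ KL(\theta'||\theta)\leq\mathcal{E}}} \Jalpha(\theta') \;=\; \min_{\substack{\theta',\theta'' \\ KL(\theta'||\theta)\leq\mathcal{E}}} \bigl[\Cgamma(\theta'') + \alpha KL(\theta''||\theta')\bigr].
\end{align*}
Denote a joint minimizer by $(\Theta',\Theta'')$; by construction $\Theta' = \RJE(\theta)$, and $\Theta''$ is the unconstrained minimizer of $\Cgamma(\cdot)+\alpha KL(\cdot||\Theta')$.

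For Statement 1, I would set $\mathcal{E}'(\theta,\alpha) := KL(\Theta''||\Theta')$ and check two facts. First, $\Theta'' = \RCEp(\Theta')$: since $\Theta''$ is the unconstrained minimizer of the Lagrangian $\Cgamma(\cdot)+\alpha KL(\cdot||\Theta')$ and the constraint is tight by the choice of $\mathcal{E}'$, the same KKT system characterises the constrained problem $\min\{\Cgamma(\theta''): KL(\theta''||\Theta')\leq\mathcal{E}'\}$. Second, $\Cgamma(\Theta'')=C^*_{\mathcal{E},\mathcal{E}'}(\theta)$: assuming for contradiction a competitor pair $(\tilde\theta',\tilde\theta'')$ respecting both trust regions with $\Cgamma(\tilde\theta'')<\Cgamma(\Theta'')$, one obtains
\begin{align*}
\Cgamma(\tilde\theta'') + \alpha KL(\tilde\theta''||\tilde\theta') \;<\; \Cgamma(\Theta'') + \alpha\mathcal{E}' \;=\; \Cgamma(\Theta'') + \alpha KL(\Theta''||\Theta'),
\end{align*}
contradicting the joint minimality of $(\Theta',\Theta'')$ in the smoothed problem.

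For Statement 2, I would view the optimal smoothed value as a function of $\alpha$,
\begin{align*}
\Lambda(\alpha) := \min_{\substack{\theta',\theta'' \\ KL(\theta'||\theta)\leq\mathcal{E}}} \bigl[\Cgamma(\theta'') + \alpha KL(\theta''||\theta')\bigr].
\end{align*}
Since $\Lambda$ is the pointwise infimum of a family of functions that are affine in $\alpha$ (each pair $(\theta',\theta'')$ contributing an affine slice), $\Lambda$ is concave in $\alpha$. By Danskin's theorem, every element of the superdifferential $\partial \Lambda(\alpha)$ equals $KL(\Theta''(\alpha)||\Theta'(\alpha)) = \mathcal{E}'(\theta,\alpha)$ for a corresponding minimizer, so concavity of $\Lambda$ forces $\mathcal{E}'(\theta,\cdot)$ to be non-increasing.

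The main obstacle will be tightening these duality and envelope arguments. The KKT equivalence in Statement 1 is clean when the inner problem is convex in $\theta''$; otherwise I would restrict the statement to stationary points rather than global minima, which still captures what the algorithm actually computes via natural-gradient steps. For Statement 2, the envelope argument only delivers weak monotonicity; promoting this to strict decrease would require an additional hypothesis such as uniqueness of $(\Theta'(\alpha),\Theta''(\alpha))$ or strict convexity of $\Cgamma$ along the minimizing curve, which I would state explicitly rather than try to derive.
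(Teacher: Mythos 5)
Your proof is correct and rests on the same underlying mechanism as the paper's --- exchange arguments exploiting the Lagrangian form of $\Jalpha$ --- but it packages both halves differently. For Statement 1, the paper first proves a separate lemma identifying the unconstrained minimizer of $\Cgamma(\cdot)+\alpha KL(\cdot||\theta')$ with the trust-region minimizer $\RC{\mathcal{E}_\alpha(\theta')}(\theta')$ at the achieved radius, and then runs a chain of inequalities; you collapse this into a single joint minimization over $(\theta',\theta'')$ and one contradiction, which is cleaner. Note, however, that your stated worry about the "KKT equivalence" needing convexity is misplaced: the direction you need (Lagrangian global minimizer $\Rightarrow$ constrained global minimizer at the achieved constraint value) holds unconditionally by exactly the elementary contradiction you already use --- this is the paper's Lemma~1 --- so there is no need to retreat to stationary points. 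Moreover, even identifying $\Theta''$ with $\RCEp(\Theta')$ can be sidestepped: the pair $\lb\Theta',\RCEp(\Theta')\rb$ is feasible for the two-step problem and $\Cgamma\lb\RCEp(\Theta')\rb\leq\Cgamma(\Theta'')=C^*_{\mathcal{E},\mathcal{E}'}(\theta)\leq\Cgamma\lb\RCEp(\Theta')\rb$, which closes the argument; you should make that last step explicit. For Statement 2, your route via concavity of $\Lambda(\alpha)$ (an infimum of affine functions of $\alpha$) plus the envelope/supergradient observation is a more conceptual restatement of the paper's two-point comparison, which derives $\alpha_1\geq\lb\Cgamma(\theta'_{\alpha_1})-\Cgamma(\theta'_{\alpha_2})\rb/\lb\mathcal{E}_{\alpha_2}-\mathcal{E}_{\alpha_1}\rb\geq\alpha_2$ directly; the two are equivalent, and your version generalizes more readily. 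Your caveat about weak versus strict monotonicity is well taken but costs you nothing: the paper's own proof also only establishes that $\mathcal{E}'$ is non-increasing in $\alpha$.
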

While it is evident from Eq.~\eqref{eq:ft} that the second step of the optimal two-step update must be a direct update, the statement that the first step is a smoothed update is non-trivial.

We split the proof into three subsections: in the first subsection, we state and proof a lemma that we need to proof statement 1. In the second subsection, we proof statement 1 and in the third subsection, we proof statement 2.

\subsection{Lemma}
\begin{lemma}
\label{th:lemma1}
With $\theta^*_{\alpha,\theta}$ defined as in Eq.~\eqref{eq:optimthetaalpha} and $\mathcal{E}_\alpha(\theta) = KL\lb\theta^*_{\alpha,\theta}||\theta\rb$ we can rewrite $\Jalpha(\theta)$:
\begin{align}
\label{eq:lemma1}
\Jalpha(\theta) &= \left. \Cgamma\lb \RCD(\theta)\rb \right\rvert_{{\mathcal{E}'}=\mathcal{E}_\alpha(\theta)} + \alpha\mathcal{E}_\alpha(\theta).
\end{align}
\end{lemma}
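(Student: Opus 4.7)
The plan is to interpret $\Jalpha(\theta)$ as a Lagrangian for a trust-region-constrained minimization of $\Cgamma$, where the KL radius and the Lagrange multiplier $\alpha$ are coupled through the choice of minimizer in \eqref{eq:optimthetaalpha}. Abbreviate $\theta^* := \theta^*_{\alpha,\theta}$ and $\mathcal{E}^* := \mathcal{E}_\alpha(\theta) = KL(\theta^*||\theta)$. Plugging $\theta^*$ into the definition of $\Jalpha(\theta) = \inf_{\theta'}\alpha KL(\theta'||\theta)+\Cgamma(\theta')$ unpacks immediately to
\begin{align*}
\Jalpha(\theta) = \alpha KL(\theta^*||\theta) + \Cgamma(\theta^*) = \alpha \mathcal{E}^* + \Cgamma(\theta^*),
\end{align*}
so the lemma reduces to showing $\Cgamma(\theta^*) = \Cgamma(\RC{\mathcal{E}^*}(\theta))$, i.e.\ that the penalty-problem minimizer $\theta^*$ attains the minimum of $\Cgamma$ over the trust region of radius $\mathcal{E}^*$.

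The key step is a one-line duality argument. Suppose, for contradiction, that there exists $\tilde\theta$ with $KL(\tilde\theta||\theta) \leq \mathcal{E}^*$ and $\Cgamma(\tilde\theta) < \Cgamma(\theta^*)$. Then
\begin{align*}
\alpha KL(\tilde\theta||\theta) + \Cgamma(\tilde\theta) \leq \alpha \mathcal{E}^* + \Cgamma(\tilde\theta) < \alpha \mathcal{E}^* + \Cgamma(\theta^*) = \Jalpha(\theta),
\end{align*}
contradicting the minimality of $\theta^*$ in the unconstrained penalized problem. Hence $\theta^*$ achieves the trust-region minimum at radius $\mathcal{E}^*$, which gives $\Cgamma(\theta^*) = \Cgamma(\RC{\mathcal{E}^*}(\theta))$. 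Substituting into the first display yields exactly \eqref{eq:lemma1}.

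I expect the main difficulty to be cosmetic rather than substantive. Since $\theta^*_{\alpha,\theta}$ and $\RC{\mathcal{E}^*}(\theta)$ are declared via $\argmin$ as single points, one should note that when the trust-region minimizer is non-unique the lemma only constrains the scalar value $\Cgamma(\RC{\mathcal{E}^*}(\theta))$, which is well defined regardless of which minimizer is selected; attainment of the infimum defining $\Jalpha(\theta)$ is already in force by the paper's standing assumption of full parametrization. Notably, the argument is carried out purely in $\theta$-space and never invokes the $u'\to p'$ reduction of App.~\ref{ap:5}, so the step is clean and model-free.
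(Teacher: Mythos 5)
Your proof is correct and follows essentially the same route as the paper's: decompose $\Jalpha(\theta)$ as $\Cgamma(\theta^*_{\alpha,\theta}) + \alpha\mathcal{E}_\alpha(\theta)$ and then show by the same contradiction argument that the penalized-problem minimizer is also the trust-region minimizer at radius $\mathcal{E}_\alpha(\theta)$. Your additional remarks on non-uniqueness of the $\argmin$ and on the argument being purely in $\theta$-space are sensible but do not change the substance.
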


\begin{proof}
With the definition of $\theta^*_{\alpha,\theta}$ as the minimizer of $\Cgamma(\theta')+\alpha KL(\theta'||\theta)$ (see \eqref{eq:optimthetaalpha}) we have
\begin{align*}
\Jalpha(\theta)&=\Cgamma\lb \theta^*_{\alpha,\theta} \rb + \alpha KL(\theta^*_{\alpha,\theta}||\theta)\\
&=\Cgamma\lb \theta^*_{\alpha,\theta} \rb + \alpha\mathcal{E}_\alpha(\theta).
\end{align*}
What is left to show is that 
\begin{align*}
\theta^*_{\alpha,\theta} \equiv \RC{\mathcal{E}_\alpha(\theta)} (\theta).
\end{align*}
As $\RC{\mathcal{E}_\alpha(\theta)} (\theta)$ is the minimizer of the cost $\Cgamma$ within the trust region defined by \mbox{$\{\theta': KL(\theta'||\theta)\leq \mathcal{E}_\alpha(\theta)\}$} we have to show that 
\begin{enumerate}
\item $\theta^*_{\alpha,\theta}$ lies within this trust region, 
\item $\Cgamma\lb \theta^*_{\alpha,\theta} \rb$ is a minimizer of the cost $\Cgamma$ within this trust region.
\end{enumerate}
The first point is trivially true as $KL(\theta^*_{\alpha,\theta}||\theta)= \mathcal{E}_\alpha(\theta)$ by definition. Hence $\theta^*_{\alpha,\theta}$
lies at the boundary of this trust region and therefore in it, as the boundary belongs to the trust region.
The second point we proof by contradiction:
Given $\theta^*_{\alpha,\theta}$ is not minimizing the cost within the trust region,
then there exists a $\hat{\theta}$ with \mbox{$\Cgamma(\hat{\theta})<\Cgamma(\theta^*_{\alpha,\theta})$} and $KL(\hat{\theta}||\theta)\leq \mathcal{E}_\alpha(\theta)= KL(\theta^*_{\alpha,\theta}||\theta)$.
Therefore it must hold that
\begin{align*}
\Cgamma(\hat{\theta})+\alpha KL(\hat{\theta}||\theta)<\Cgamma(\theta^*_{\alpha,\theta}) +\alpha KL(\theta^*_{\alpha,\theta},\theta)
\end{align*}
which is a contradiction, as $\theta^*_{\alpha,\theta}$ is the minimizer of $\Cgamma(\theta')+\alpha KL(\theta'||\theta)$.
\end{proof}

\subsection{Proof of Statement 1}
Here we show that for every $\alpha$ and $\theta$ there exists an ${\mathcal{E}'}=\mathcal{E}^*_\alpha(\theta)$ such that 
\begin{align}
\label{eq:smoothingeq}
\left. C \lb \RCD\lb \RJE(\theta) \rb \rb \right\rvert_{{\mathcal{E}'}=\mathcal{E}^*_\alpha(\theta)} = \left. C^*_{\mathcal{E},{\mathcal{E}'}} \right\rvert_{{\mathcal{E}'}=\mathcal{E}^*_\alpha(\theta)}.
\end{align}
\begin{proof}
As $\Jalpha(\theta)$ is the infimum of  $\Cgamma(\theta')+\alpha KL(\theta'||\theta)$, we have for any ${\mathcal{E}'}>0$
\begin{align*}
\Jalpha(\theta) &\leq  \Cgamma\lb \RCD(\theta)\rb+\alpha KL\lb \RCD(\theta)||\theta\rb .
\end{align*}
Further, as $\RCD(\theta)$ lies in the trust region $\{\theta': KL(\theta'||\theta)\leq {\mathcal{E}'}\}$
we have that $KL\lb \RCD(\theta)||\theta\rb\leq{\mathcal{E}'}$, so we can write
\begin{align*}
 \Cgamma\lb \RCD(\theta)\rb+\alpha KL\lb \RCD(\theta)||\theta\rb 
&\leq  \Cgamma\lb \RCD(\theta)\rb+\alpha {\mathcal{E}'}  
\end{align*}
and thus
\begin{align*}
\Jalpha(\theta)
&\leq  \Cgamma\lb \RCD(\theta)\rb+\alpha {\mathcal{E}'} . 
\end{align*}
Next we minimize both sides of this inequality within the trust region \mbox{$\{\theta': KL(\theta'||\theta)\leq \mathcal{E}\}$}.
We use that 
\begin{align*}
\Jalpha\lb \RJE(\theta) \rb=&\min_{
\substack{
\theta' \\
\text{s.t.}~KL(\theta'||\theta)\leq \mathcal{E}
}
} \Jalpha(\theta')
\end{align*}
and get
\begin{align}
\label{eq:proofsmooth1}
\Jalpha\lb \RJE(\theta) \rb\leq& \min_{
\substack{
\theta' \\
\text{s.t.}~KL(\theta'||\theta)\leq \mathcal{E}
}
} \lb \Cgamma\lb \RCD(\theta')\rb+\alpha {\mathcal{E}'} \rb. 
\end{align}
Now we use Lemma~\ref{th:lemma1} and rewrite the left hand side of this inequality.
\begin{align*}
\Jalpha\lb\RJE(\theta)\rb &=\left.  \Cgamma\lb \RCD\lb \RJE(\theta)\rb\rb \right\rvert_{{\mathcal{E}'}=\mathcal{E}^*_\alpha(\theta)} +\alpha \mathcal{E}^*_\alpha(\theta) 
\end{align*}
with $\mathcal{E}^*_\alpha(\theta) \coloneqq \mathcal{E}_\alpha(\RJE(\theta))$.
Plugging this back to \eqref{eq:proofsmooth1} we get
\begin{align*}
\left.  \Cgamma\lb \RCD\lb \RJE(\theta)\rb\rb \right\rvert_{{\mathcal{E}'}=\mathcal{E}^*_\alpha(\theta)} +\alpha \mathcal{E}^*_\alpha(\theta) 
\leq& \min_{
\substack{
\theta' \\
\text{s.t.}~KL(\theta'||\theta)\leq \mathcal{E}
}
} \lb \Cgamma\lb \RCD(\theta')\rb+\alpha {\mathcal{E}'} \rb.
\end{align*}
As this inequality holds for any ${\mathcal{E}'}>0$ we can plug in $\mathcal{E}^*_\alpha(\theta)$ on the right hand side of this inequality and obtain
\begin{align*}
\left.  \Cgamma\lb \RCD\lb \RJE(\theta)\rb\rb \right\rvert_{{\mathcal{E}'}=\mathcal{E}^*_\alpha(\theta)} +\alpha \mathcal{E}^*_\alpha(\theta) 
\leq& \min_{
\substack{
\theta' \\
\text{s.t.}~KL(\theta'||\theta)\leq \mathcal{E}
}
} \left. \Cgamma\lb \RCD(\theta')\rb\right\rvert_{{\mathcal{E}'}=\mathcal{E}^*_\alpha(\theta)}+\alpha  \mathcal{E}^*_\alpha(\theta).
\end{align*}
We subtract $\alpha\mathcal{E}^*_\alpha(\theta)$ on both sides
\begin{align*}
\left.  \Cgamma\lb \RCD\lb \RJE(\theta)\rb\rb \right\rvert_{{\mathcal{E}'}=\mathcal{E}^*_\alpha(\theta)} 
\leq& \min_{
\substack{
\theta' \\
\text{s.t.}~KL(\theta'||\theta)\leq \mathcal{E}
}
} \left. \Cgamma\lb \RCD(\theta')\rb\right\rvert_{{\mathcal{E}'}=\mathcal{E}^*_\alpha(\theta)}.
\end{align*}
Using Eq.~\eqref{eq:ft} gives
\begin{align*}
\left.  \Cgamma\lb \RCD\lb \RJE(\theta)\rb\rb \right\rvert_{{\mathcal{E}'}=\mathcal{E}^*_\alpha(\theta)} 
\leq& \left. C^*_{\mathcal{E},{\mathcal{E}'}}(\theta) \right\rvert_{{\mathcal{E}'}=\mathcal{E}^*_\alpha(\theta)}, \\
\end{align*}
which concludes the proof.
\end{proof}

\subsection{Proof of Statement 2}
Here we show that $\mathcal{E'}=\mathcal{E}^*_\alpha(\theta)$ is a monotonically decreasing function of $\alpha$.
$\mathcal{E}^*_\alpha(\theta)$ is given by
\begin{align*}
 \mathcal{E}^*_\alpha(\theta) = \mathcal{E}_\alpha\lb\RJE(\theta)\rb = \left. KL(\theta^*_{\alpha,\theta'}||\theta')\right\rvert_{\theta'=R^{\Jalpha}_{\mathcal{E}}(\theta)}.
\end{align*}
We have 
\begin{align*}
\left.\lb \alpha KL(\theta^*_{\alpha,\theta'}||\theta') + \Cgamma\lb \theta^*_{\alpha,\theta'} \rb \rb\right\rvert_{\theta'=R^{\Jalpha}_{\mathcal{E}}(\theta)} =& \left.\lb\inf_{\theta''} \alpha KL(\theta''||\theta') + \Cgamma(\theta'') \rb\right\rvert_{\theta'=R^{\Jalpha}_{\mathcal{E}}(\theta)}\\
=& \min_{
\substack{
\theta' \\
\text{s.t.}~KL(\theta'||\theta)\leq \mathcal{E}
}
}\inf_{\theta''} \alpha KL(\theta''||\theta') + \Cgamma(\theta'').
\end{align*}
For convenience we introduce a shorthand notation for the minimizers
\begin{align*}
\theta_\alpha &\coloneqq \RJE(\theta) \\
\theta'_\alpha &\coloneqq \theta^*_{\alpha,\theta'}|_{\theta'=\RJE(\theta)}.
\end{align*}
We compare $\alpha_1\geq 0$ with $\mathcal{E}^*_{\alpha_1}(\theta) \coloneqq KL(\theta'_{\alpha_1}||\theta_{\alpha_1})$
and $\alpha_2\geq 0$ with \mbox{$\mathcal{E}^*_{\alpha_2}(\theta) \coloneqq KL(\theta'_{\alpha_2}||\theta_{\alpha_2})$} and assume that $\mathcal{E}^*_{\alpha_1}(\theta) < \mathcal{E}^*_{\alpha_2}(\theta)$.
We show that from this it follows that $\alpha_1 > \alpha_2$.
\begin{proof}
As $\theta'_{\alpha_1}$,$\theta_{\alpha_1}$ minimize $\alpha_1 KL(\theta'||\theta) + \Cgamma(\theta')$ we have
\begin{align*}
\alpha_1 KL(\theta'_{\alpha_1}||\theta_{\alpha_1}) + \Cgamma(\theta'_{\alpha_1}) &\leq \alpha_1 KL(\theta'_{\alpha_2}||\theta_{\alpha_2}) + \Cgamma(\theta'_{\alpha_2}) \\
\Rightarrow \alpha_1 {\mathcal{E}}_{\alpha_1}(\theta) + \Cgamma(\theta'_{\alpha_1}) &\leq \alpha_1 {\mathcal{E}}_{\alpha_2}(\theta) + \Cgamma(\theta'_{\alpha_2})
\end{align*}
and analogous for $\alpha_2$
\begin{align*}
\alpha_2 KL(\theta'_{\alpha_1}||\theta_{\alpha_1}) + \Cgamma(\theta'_{\alpha_1}) &\geq \alpha_2 KL(\theta'_{\alpha_2}||\theta_{\alpha_2}) + \Cgamma(\theta'_{\alpha_2})
\\
\Rightarrow \alpha_2 {\mathcal{E}}_{\alpha_1}(\theta) + \Cgamma(\theta'_{\alpha_1}) &\geq \alpha_2 {\mathcal{E}}_{\alpha_2}(\theta) + \Cgamma(\theta'_{\alpha_2})
\end{align*}
With ${\mathcal{E}}_{\alpha_1}(\theta) < {\mathcal{E}}_{\alpha_2}(\theta)$ we get
\begin{align*}
\alpha_1 \geq \frac{\Cgamma(\theta'_{\alpha_1})-\Cgamma(\theta'_{\alpha_2})}{{\mathcal{E}}_{\alpha_2}(\theta)-{\mathcal{E}}_{\alpha_1}(\theta)} \geq \alpha_2.
\end{align*}
\end{proof}
We showed that from ${\mathcal{E}}_{\alpha_1}(\theta) < {\mathcal{E}}_{\alpha_2}(\theta)$ it follows that $\alpha_1 \geq \alpha_2$ which proofs that $\mathcal{E}_\alpha(\theta)$ is monotonously decreasing in $\alpha$.

Direct updates are myopic and do not take into account successive steps and are thus suboptimal when more than one update is needed. 
Smoothed updates on the other hand, as we see on theorem \ref{th:smooth}, anticipate a subsequent step and minimize the cost that results from this this two-step update.
Hence smoothed updates favour a greater cost reduction in the future over maximal cost reduction in the current step.
The strength of this anticipatory effect depends on the smoothing strength, which is controlled by the smoothing parameter $\alpha$:
For large $\alpha$, smoothing is weak and the size $\mathcal{E}'$ of this anticipated second step becomes small. 
Fig.~\ref{fig0} B illustrates that for this case, when  $\mathcal{E}'$ becomes small, smoothed updates become more similar to direct updates.
In the limiting case $\alpha \ra \infty$ the difference between smoothed and direct updates vanishes completely, as  $\Jalpha(\theta) \ra \Cgamma(\theta)$ (see section \ref{sec:alphagamma}). 

We expect that also with multiple update steps due to this anticipatory effect, iterating smoothed updates leads to a faster decrease of the cost than iterating direct updates.
We will confirm this by numerical studies.
Furthermore, we expect that this accelerating effect of smoothing is stronger for smaller values of $\alpha$.
On the other hand, as we will discuss in the next section, for smaller values of $\alpha$ it is harder to accurately perform the smoothed updates.
Therefore we expect an optimal performance for an intermediate value of $\alpha$.
Based on this we build an algorithm in the next section that aims to accelerate policy optimization by cost function smoothing.

\section{Additional Theoretical Results for Section~\ref{sec:algorithm}} 

\subsection{Smoothed Updates for Small Update Steps $\mathcal{E}$}
\label{ap:TRPOTRPI}
We want to compute Eq.~\eqref{eq:unconstraintargminf} for small $\mathcal{E}$ which corresponds to large $\beta$. Assuming a smooth dependence of $p_{u_\theta}$ on $\theta$, bounding $KL(\theta||\theta_n)$ to a very small value allows us to do a Taylor expansion which we truncate at second order:
\begin{align}
&\argmin_{\theta'} \Jalpha(\theta') + \beta KL(\theta'||\theta_{n}) \approx\\
&\qquad
\approx \argmin_{\theta'}~ (\theta'-\theta_n)^T \nabla_{\theta'}\Jalpha(\theta') + \frac{1}{2}(\theta'-\theta_n)^T \lb H +\beta F \rb (\theta'-\theta_n) \\
&\qquad= \theta_n - \beta^{-1}F^{-1}\left.\nabla_{\theta'}\Jalpha(\theta')\right\rvert_{\theta'=\theta_n} + \mathcal{O}(\beta^{-2})
\label{eq:natgradlagrange}
\end{align}
with 
\begin{align*}
H &= \left.\nabla_{\theta'} \nabla_{\theta'}^T \Jalpha(\theta')\right\rvert_{\theta'=\theta_n}\\
F &= \left.\nabla_{\theta'} \nabla_{\theta'}^T KL(\theta'||\theta_n)\right\rvert_{\theta'=\theta_n}.
\end{align*}
See also \cite{martens2014new}.
We used that $\mathcal{E}\ll 1 \Leftrightarrow \beta \gg 1$. With this the Fisher information $F$ dominates over the Hessian $H$ and thus the Hessian does not appear anymore in the update equation.
This defines a natural gradient update with stepsize $\beta^{-1}$.

\subsection{Inversion of the Fisher matrix}
\label{ap:4}
We compute an approximation to the natural gradient $g_f = F^{-1}g$ by approximately solving the linear equation $Fg_f = g$ using truncated conjugate gradient.
With the normal gradient $g$ and the Fisher matrix $F = \nabla_\theta \nabla_\theta^T KL(p_{u_{\theta}}||p_{u_{\theta_n}})$ (see App.~\ref{ap:TRPOTRPI}).

We use an efficient way to compute the Fisher vector product $F\myvec{y}$ \cite{schulman2015trust} using an automated differentiation package:
First for each rollout $i$ and timepoint $t$ the symbolic expression for the gradient on the KL multiplied
by a vector $\myvec{y}$ is computed:
\begin{align*}
a_{i,t}(\theta_{n+1})&=\lb \nabla_{\theta_{n+1}}^T \log \frac{\pi_{\theta_n}(\myvec{a}_t^i|t,\myvec{x}^i_{t})}{\pi_{\theta_{n+1}}(\myvec{a}_t^i|t,\myvec{x}^i_{t})} \rb \myvec{y}.
\end{align*}

Then we take the second derivative on this scalar quantity, sum over all times and average over the samples. This gives then the Fisher vector
\begin{align*}
F\myvec{y} &= \frac{1}{N}\sum_{i=1}^N \sum_{0<t<T} \nabla_{\theta_{n+1}} a_{i,t}(\theta_{n+1}).
\end{align*}

For practical reasons, we reverse the arguments of the KL, since it is easier to estimate it from samples drawn from the first argument. For very small values, the $KL$ is approximately symmetric in its arguments. Also, the equality in~\eqref{eq:eqtrust} differs from~\cite{schulman2015trust}, which optimizes a value function within the trust region, e.g., $KL(\theta_n||\theta_{n+1})\leq\mathcal{E}$.

\subsection{Proof for equivalence of weight entropy and KL-divergence}
\label{ap:3}
We want to show that 
\begin{align*}
\lim_{N\ra \infty} \log N - H_N(w) =&\lim_{N\ra \infty} \log N +\sum_{i=1}^N w^i \log(w^i) \\
=& KL(p^*_{\alpha,\theta}||p_{u_\theta}).
\end{align*}
Where the samples $i$ are drawn from $p_{u_\theta}$ and the $w^i$ are given by
\begin{align*}
w^i = \frac{1}{{\sum_i^N \exp \lb -\frac{1}{\gamma+\alpha} S_{p_{u_\theta}}(\tau^i) \rb}}\exp \lb -\frac{1}{\gamma+\alpha} S_{p_{u_\theta}}(\tau^i) \rb,
\end{align*}

We get
\begin{align*}
&\lim_{N\ra \infty} \log N +\sum_{i=1}^N w^i \log(w^i) = \\
=& \lim_{N\ra \infty} \log N +\sum_{i=1}^N  \frac{1}{{\sum_i^N \exp \lb -\frac{1}{\gamma+\alpha} \Sgamma_{p_{u_\theta}}(\tau^i) \rb}}\exp \lb -\frac{1}{\gamma+\alpha} \Sgamma_{p_{u_\theta}}(\tau^i) \rb 
\cdot \\ &\qquad\qquad\qquad\qquad\qquad \cdot 
\log\lb  \frac{1}{{\sum_i^N \exp \lb -\frac{1}{\gamma+\alpha} \Sgamma_{p_{u_\theta}}(\tau^i) \rb}}\exp \lb -\frac{1}{\gamma+\alpha} \Sgamma_{p_{u_\theta}}(\tau^i) \rb \rb \\
=& \lim_{N\ra \infty} \log N +\frac{1}{N}\sum_{i=1}^N  \frac{1}{\frac{1}{N}{\sum_i^N \exp \lb -\frac{1}{\gamma+\alpha} \Sgamma_{p_{u_\theta}}(\tau^i) \rb}}\exp \lb -\frac{1}{\gamma+\alpha} \Sgamma_{p_{u_\theta}}(\tau^i) \rb 
\cdot \\ &\qquad\qquad\qquad\qquad\qquad \cdot 
\log\lb  \frac{\frac{1}{N}}{{\frac{1}{N}\sum_i^N \exp \lb -\frac{1}{\gamma+\alpha} \Sgamma_{p_{u_\theta}}(\tau^i) \rb}}\exp \lb -\frac{1}{\gamma+\alpha} \Sgamma_{p_{u_\theta}}(\tau^i) \rb \rb \\
=& \lim_{N\ra \infty} \frac{1}{N}\sum_{i=1}^N  \frac{1}{\frac{1}{N}{\sum_i^N \exp \lb -\frac{1}{\gamma+\alpha} \Sgamma_{p_{u_\theta}}(\tau^i) \rb}}\exp \lb -\frac{1}{\gamma+\alpha} \Sgamma_{p_{u_\theta}}(\tau^i) \rb 
\cdot \\ &\qquad\qquad\qquad\qquad\qquad \cdot 
\log\lb  \frac{1}{{\frac{1}{N}\sum_i^N \exp \lb -\frac{1}{\gamma+\alpha} \Sgamma_{p_{u_\theta}}(\tau^i) \rb}}\exp \lb -\frac{1}{\gamma+\alpha} \Sgamma_{p_{u_\theta}}(\tau^i) \rb \rb
\end{align*}
Now we replace in the limit $N\ra \infty$, $\frac{1}{N}\sum_i^N \ra \lla \rra_{p_{u_\theta}}$:
\begin{align*}
=& \lla  \frac{1}{\lla \exp \lb -\frac{1}{\gamma+\alpha} \Sgamma_{p_{u_\theta}}(\tau) \rb \rra_{p_{u_\theta}}}\exp \lb -\frac{1}{\gamma+\alpha} \Sgamma_{p_{u_\theta}}(\tau) \rb 
\right.\cdot\\ &\cdot\left. 
\log\lb  \frac{1}{\lla \exp \lb -\frac{1}{\gamma+\alpha} \Sgamma_{p_{u_\theta}}(\tau) \rb \rra_{p_{u_\theta}}}\exp \lb -\frac{1}{\gamma+\alpha} \Sgamma_{p_{u_\theta}}(\tau) \rb \rb \rra_{p_{u_\theta}}
\end{align*}
Using Eq.~\eqref{eq:annealingcurve} this gives
\begin{align*}
=& \lla \log\lb  \frac{1}{\lla \exp \lb -\frac{1}{\gamma+\alpha} \Sgamma_{p_{u_\theta}}(\tau) \rb \rra_{p_{u_\theta}}}\exp \lb -\frac{1}{\gamma+\alpha} \Sgamma_{p_{u_\theta}}(\tau) \rb \rb \rra_{p^*_{\alpha,\theta}} \\
=& \lla \log\lb  \frac{1}{\lla \exp \lb -\frac{1}{\gamma+\alpha} \Sgamma_{p_{u_\theta}}(\tau) \rb \rra_{p_{u_\theta}}}\exp \lb -\frac{1}{\gamma+\alpha} \Sgamma_{p_{u_\theta}}(\tau) \rb \frac{p_{u_\theta}(\tau)}{p_{u_\theta}(\tau)} \rb \rra_{p^*_{\alpha,\theta}} \\
=& \lla \log \frac{p^*_{\alpha,\theta}(\tau)}{p_{u_\theta}(\tau)} \rra_{p^*_{\alpha,\theta}} \\
=&KL(p^*_{\alpha,\theta}||p_{u_\theta}).
\end{align*}

\subsection{The Smoothness Parameter ${\Delta}$ is monotonic in $\alpha$}
\label{ap:deltaalpha}
Now we show that
\begin{align*}
{\Delta} = KL(p^*_{\alpha,\theta}||p_{u_{\theta}})
\end{align*}
is a monotonic function of $\alpha$.
\begin{align*}
\frac{\partial}{\partial \alpha} KL(p^*_{\alpha,\theta}||p_{u_{\theta}})&= \frac{\partial}{\partial \alpha}\lla \ln \frac{p^*_{\alpha,\theta}}{p_{u_{\theta}}} \rra_{p^*_{\alpha,\theta}} \\ 
&= \frac{\partial}{\partial \alpha}\lla \frac{p^*_{\alpha,\theta}}{p_{u_{\theta}}} \ln \frac{p^*_{\alpha,\theta}}{p_{u_{\theta}}} \rra_{p_{u_{\theta}}} \\
&=\lla \lb \frac{\partial}{\partial \alpha}  \frac{p^*_{\alpha,\theta}}{p_{u_{\theta}}} \rb\ln \frac{p^*_{\alpha,\theta}}{p_{u_{\theta}}} \rra_{p_{u_{\theta}}}  
+
\lla \frac{p^*_{\alpha,\theta}}{p_{u_{\theta}}} \frac{\partial}{\partial \alpha} \ln \frac{p^*_{\alpha,\theta}}{p_{u_{\theta}}} \rra_{p_{u_{\theta}}} \\
&=\lla \lb \frac{\partial}{\partial \alpha}  \frac{p^*_{\alpha,\theta}}{p_{u_{\theta}}} \rb\ln \frac{p^*_{\alpha,\theta}}{p_{u_{\theta}}} \rra_{p_{u_{\theta}}}  
+
\lla \frac{1}{p_{u_{\theta}}} \frac{\partial}{\partial \alpha} p^*_{\alpha,\theta} \rra_{p_{u_{\theta}}} \\
&=\lla \lb \frac{\partial}{\partial \alpha}  \frac{p^*_{\alpha,\theta}}{p_{u_{\theta}}} \rb\ln \frac{p^*_{\alpha,\theta}}{p_{u_{\theta}}} \rra_{p_{u_{\theta}}}  
+
\frac{\partial}{\partial \alpha}\lla  1\rra_{p^*_{\alpha,\theta}} \\
&=\lla \lb \frac{\partial}{\partial \alpha}  \frac{p^*_{\alpha,\theta}}{p_{u_{\theta}}} \rb\ln \frac{p^*_{\alpha,\theta}}{p_{u_{\theta}}} \rra_{p_{u_{\theta}}}.  \\
\end{align*}
Now let us look at 
\begin{align*}
 \frac{\partial}{\partial \alpha}  \frac{p^*_{\alpha,\theta}}{p_{u_{\theta}}} 
 &= \frac{\partial}{\partial \alpha}  \lb \frac{1}{Z^\alpha_{p_{u_{\theta}}}} \exp \lb -\frac{1}{\gamma+\alpha} \Sgamma_{p_{u_{\theta}}}(\tau) \rb \rb
\\
Z^\alpha_{p_{u_{\theta}}}&=\lla \exp \lb -\frac{1}{\gamma+\alpha} \Sgamma_{p_{u_{\theta}}}(\tau) \rb \rra_{p_{u_{\theta}}}.
 \end{align*}
we get
\begin{align*}
\frac{\partial}{\partial \alpha}  \frac{p^*_{\alpha,\theta}}{p_{u_{\theta}}}  &= \frac{1}{\lb \gamma + \alpha  \rb^2} \Sgamma_{p_{u_{\theta}}}(\tau)\frac{p^*_{\alpha,\theta}}{p_{u_{\theta}}} - \frac{p^*_{\alpha,\theta}}{p_{u_{\theta}}} \frac{1}{Z^\alpha_{p_{u_{\theta}}}} \frac{\partial}{\partial \alpha} Z^\alpha_{p_{u_{\theta}}} \\
\frac{\partial}{\partial \alpha} Z^\alpha_{p_{u_{\theta}}}&=\lla \frac{1}{\lb \gamma + \alpha  \rb^2} \Sgamma_{p_{u_{\theta}}} \exp \lb -\frac{1}{\gamma+\alpha} \Sgamma_{p_{u_{\theta}}}(\tau) \rb \rra_{p_{u_{\theta}}}.
\end{align*}
and thus 
\begin{align*}
\frac{\partial}{\partial \alpha}  \frac{p^*_{\alpha,\theta}}{p_{u_{\theta}}}  &= 
\frac{1}{\lb \gamma + \alpha  \rb^2} \Sgamma_{p_{u_{\theta}}}(\tau)\frac{p^*_{\alpha,\theta}}{p_{u_{\theta}}} - \frac{p^*_{\alpha,\theta}}{p_{u_{\theta}}} \frac{1}{\lb \gamma + \alpha  \rb^2} \lla \Sgamma_{p_{u_{\theta}}} \rra_{p^*_{\alpha,\theta}} \\
&= 
\frac{1}{\lb \gamma + \alpha  \rb^2}\frac{p^*_{\alpha,\theta}}{p_{u_{\theta}}} \lb \Sgamma_{p_{u_{\theta}}}(\tau)-  \lla \Sgamma_{p_{u_{\theta}}} \rra_{p^*_{\alpha,\theta}} \rb. \\
\end{align*}
So finally we get
\begin{align*}
\frac{\partial}{\partial \alpha} KL(p^*_{\alpha,\theta}||p_{u_{\theta}}) 
&= \frac{1}{\lb \gamma + \alpha  \rb^2} \lla \frac{p^*_{\alpha,\theta}}{p_{u_{\theta}}} \lb \Sgamma_{p_{u_{\theta}}}(\tau)-  \lla \Sgamma_{p_{u_{\theta}}} \rra_{p^*_{\alpha,\theta}} \rb \ln \frac{p^*_{\alpha,\theta}}{p_{u_{\theta}}} \rra_{p_{u_{\theta}}}  \\
&= \frac{1}{\lb \gamma + \alpha  \rb^2} \lla \frac{p^*_{\alpha,\theta}}{p_{u_{\theta}}} \lb \Sgamma_{p_{u_{\theta}}}(\tau)-  \lla \Sgamma_{p_{u_{\theta}}} \rra_{p^*_{\alpha,\theta}} \rb \lb  -\frac{1}{\gamma+\alpha} \Sgamma_{p_{u_{\theta}}}(\tau)  -\log Z^\alpha_{p_{u_{\theta}}} \rb  \rra_{p_{u_{\theta}}}  \\
&= \frac{1}{\lb \gamma + \alpha  \rb^2} \lla \lb \Sgamma_{p_{u_{\theta}}}(\tau)-  \lla \Sgamma_{p_{u_{\theta}}} \rra_{p^*_{\alpha,\theta}} \rb \lb  -\frac{1}{\gamma+\alpha} \Sgamma_{p_{u_{\theta}}}(\tau)  -\log Z^\alpha_{p_{u_{\theta}}} \rb  \rra_{p^*_{\alpha,\theta}}  \\
&= -\frac{1}{\lb \gamma + \alpha  \rb^3} \lb \lla \lb  \Sgamma_{p_{u_{\theta}}}\rb^2 \rra_{p^*_{\alpha,\theta}}-  \lla \Sgamma_{p_{u_{\theta}}} \rra_{p^*_{\alpha,\theta}}^2  \rb \\
&= -\frac{1}{\lb \gamma + \alpha  \rb^3} \text{Var}\lb \Sgamma_{p_{u_{\theta}}} \rb \leq 0.
\end{align*}
Therefore
\begin{align*}
{\Delta} = KL(p^*_{\alpha,\theta}||p_{u_{\theta}})
\end{align*}
is a monotonically decreasing function of $\alpha$.

\begin{algorithm}[t]
\caption{ASPIC - Adaptive Smoothing of Path Integral Control}\label{alg:alg1}
\begin{algorithmic}
\REQUIRE State cost function $V(\myvec{x},t)$\\
$\qquad \qquad$ control cost parameter $\mybar{\gamma}$ \\
$\qquad \qquad$ base policy that defines uncontrolled dynamics $\pi_0$ \\
$\qquad \qquad$ simulator of system dynamics with a parametrized policy $\pi_\theta$ \\
$\qquad \qquad$ trust region sizes $\mathcal{E}$ \\
$\qquad \qquad$ smoothing strength $\Delta$ \\
$\qquad \qquad$ number of samples $N$
\STATE initialize $\theta_0$
\STATE $n=0$
\REPEAT 
\STATE draw samples $\tau^i$, with $i=1,\hdots,N$, from simulator controlled by parametrized policy $\pi_{\theta_n}$
\STATE for each sample $i$  compute $\Sgamma_{p_{u_{\theta_n}}}(\tau^i)=\sum_{0<t<T} V(\myvec{x}^i_t,t)+\mybar{\gamma}\log \frac{\pi_{{\theta_n}}(\myvec{a}_t^i|t,\myvec{x}^i_{t})}{\pi_0(\myvec{a}_t^i|t,\myvec{x}^i_{t})}$ 
\STATE \COMMENT{Find minimal $\alpha$ such that $KL\leq \Delta$}
\STATE $\alpha \leftarrow 0$
\REPEAT 
\STATE increase $\alpha$
\STATE $S^i_\alpha \leftarrow \Sgamma_{p_{u_{\theta_n}}}(\tau^i) \cdot \frac{1}{\mybar{\gamma}+\alpha}$
\STATE compute weights $w_i \leftarrow \exp(-S^i_\alpha)$
\STATE normalize weights $w_i \leftarrow \frac{w_i}{\sum_i(w_i)}$
\STATE compute sample size independent weight entropy $KL \leftarrow \log{N}+\sum_i w_i\log(w_i)$ 
\UNTIL{$KL\leq \Delta$}
\STATE \COMMENT{whiten the weigths}
\STATE $\hat{w}_i \leftarrow \frac{w_i-\text{mean}(w_i)}{\text{std}(w_i)}$
\STATE \COMMENT{compute the gradient on the smoothed cost}
\STATE $g \leftarrow \sum_i \sum_t \hat{w}_i \left. \frac{\partial}{\partial \theta} \log \pi_\theta(\myvec{a}_t^i|t,\myvec{x}^i_{t})\right\rvert_{\theta=\theta_n}$
\STATE \COMMENT{compute Fisher matrix}
\STATE use conjugate gradient descent to compute an approximate solution to the natural gradient $g_F=F^{-1}g$ (see App.~\ref{ap:4})
\STATE do line search to compute step size $\eta$ such $KL(\theta_n||\theta_{n+1})=\mathcal{E}.$
\STATE update parameters $\theta_{n+1} \leftarrow \theta_n + \eta \cdot g_F$ 
\STATE $n=n+1$
\UNTIL{convergence}
\end{algorithmic}
\end{algorithm}

\section{Experimental Details and Additional Results}
\label{ap:addexp}
Algorithm~\ref{alg:alg1} summarizes ASPIC.
We first analyze the behavior of ASPIC in a simple linear-quadratic control problem,~\ref{sec:LQ},\ref{ap:fpLQ}.
We then look at the dependence on the number of rollouts per iteration~$N$ in~\ref{ap:expN} and the interplay between smoothing strength $\Delta$ and trust region size $\mathcal{E}$ in~\ref{ap:expinter}. Finally, we describe the parameter settings for all tasks in~\ref{ap:detnexp}.

\begin{figure}[t]
\vskip 0.2in
\begin{center}
\centerline{\includegraphics[width=.5\columnwidth]{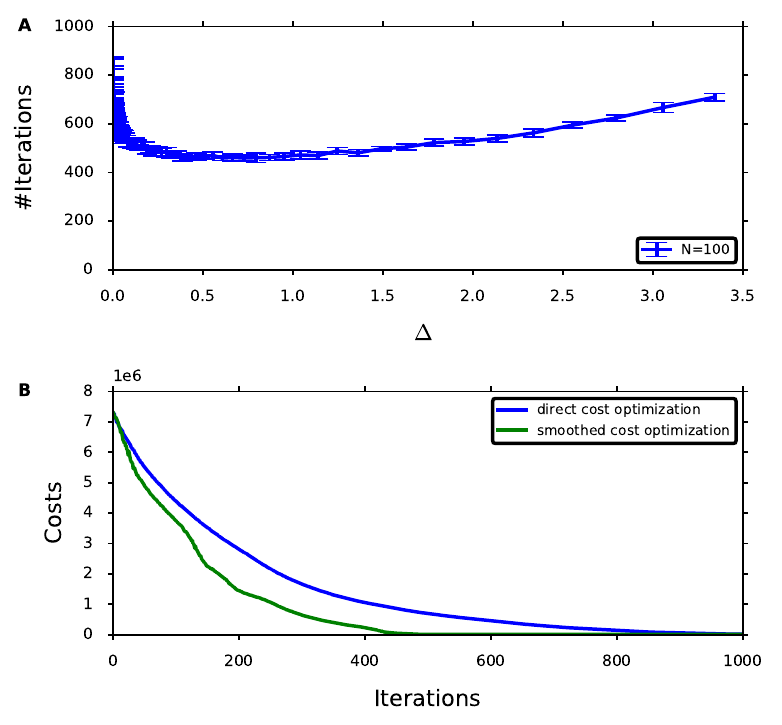}}
\caption{LQ control problem: Brownian viapoints. For each iteration we used $N=100$ rollouts to compute the gradient. A) Number of iterations needed for the cost to cross a threshold $\Cgamma\leq2\cdot 10^4$ versus the smoothing strength $\Delta$.
For $\Delta=0$ there is no smoothing. Increasing the smoothing strength results in a faster decrease of the cost; when $\Delta$ is increased further the performance decreases again.
Errorbars denote mean and standard deviation over $10$ runs of the algorithm.
B) Cost versus the iterations of the algorithm. Direct optimization of the cost exhibits a slower convergence rate than optimization of the smoothed cost with $\Delta=0.2\log100$.
}
\label{fig1}
\end{center}
\vskip -0.2in
\end{figure} 

\subsection{A Simple Linear-Quadratic Control Problem: Brownian Viapoints}
\label{sec:LQ}
We analyse the convergence speed for different values of the smoothing strength $\Delta$ in the task of controlling a one-dimensional Brownian particle
\begin{align}
\dot{x} = u(x,t) + \xi.
\label{eq:brownianparticle}
\end{align}
We define the state cost as a quadratic penalty for deviating from the viapoints $x_i$ at the different times $t_i$: $V(x,t) = \sum_{i} \delta\lb t-t_i\rb \frac{\lb x-x_i \rb^2}{2\sigma^2}$ with $\sigma = 0.1$. As a parametrized controller we use a time varying linear feedback controller, i.e., $u_\theta(x,t) =  \theta_{1,t}x+\theta_{0,t}$. This controller fulfils the requirement of full parametrization for this task (see App.~\ref{ap:fpLQ}).
For further details of the numerical experiment see appendix~\ref{ap:detnexp}.

We apply ASPIC to this control problem and compare its performance for different sizes of the smoothing strength $\Delta$ (see Fig.~\ref{fig1}). 
The results confirm our expectations from our theoretical analysis. 
As predicted by theory we observe an acceleration of the policy optimization when smoothing is switched on.
This acceleration becomes more pronounced when $\Delta$ is increased, which we attribute to an increase of the anticipatory effect of the smoothed updates as smoothing becomes stronger (see section \ref{sec:Theory}).
When $\Delta$ is too large the performance of the algorithm deteriorates again, which is in line with our discussion of gradient estimation problems that arise for strong smoothing. 

\subsection{Full parametrization in LQ problem}
\label{ap:fpLQ}
Here we discuss why for a linear quadratic problem a time varying linear controller is a full parametrization.
We want to show that for every
\begin{align}
p^*_{\alpha,{\theta_0}}&=\frac{1}{Z} p_{u_0}(\tau) \exp \lb -\frac{1}{\gamma+\alpha} \Sgamma_{p_{u_{\theta_0}}}(\tau) \rb
\end{align}
there is a time varying linear controller $u_{\theta_{\alpha,{\theta_0}}^*}$ such that $p_{u_{\theta_{\alpha,{\theta_0}}^*}}=p^*_{\alpha,{\theta_0}}$.
We  assume that $u_{\theta_0}$ is a time varying linear controller.
In App.~\ref{ap:5} we have shown that $u^*_{\alpha,{\theta_0}}$ is the solution to the Path Integral control problem with dynamics
\begin{align*}
\dot{\myvec{x}}_t=\myvec{f}(\myvec{x}_t,t) + g(\myvec{x}_t,t)\left( \myvec{\tilde{u}}(\myvec{x}_t,t)+\myvec{\hat{u}}(\myvec{x}_t,t) + \myvec{\xi}_t \right)
\end{align*}
and cost
\begin{align}
\lla \int_0^T \frac{1}{\gamma}V(\myvec{x}_t,t) -\frac{1}{2} \frac{\mybar{\gamma}}{\alpha}\myvec{\tilde{u}}(\myvec{x}_t,t)^T \myvec{\tilde{u}}(\myvec{x}_t,t)  dt+\int_0^T \lb\frac{1}{2} \myvec{\hat{u}}(\myvec{x}_t,t)^T \myvec{\hat{u}}(\myvec{x}_t,t) + \myvec{\hat{u}}(\myvec{x}_t,t)^T\myvec{\xi}_t\rb dt \rra_{p_{\hat{u}}},
\end{align}
with $\myvec{\tilde{u}}=\lb 1-\frac{\mybar{\gamma}}{\gamma+\alpha} \rb  \myvec{u}_{\theta_0}(\myvec{x}_t,t)$.

It is now easy to see that if $u_{\theta_0}$ is a time varying linear controller, thus a linear function of the state,  the cost is a quadratic function of the state $x$ (note that $V(\myvec{x}_t,t)$ is quadratic in the LQ case).
Thus for all values of $\alpha$, $u^*_{\alpha,{\theta_0}}$ is the solution to a linear quadratic control problem and thus a time varying linear controller (see e.g. \cite{kwakernaak1972linear}). Therefore a time varying linear controller is a full parametrization.

\subsection{Dependence on the Number of Rollouts per Iteration $N$}
\label{ap:expN}

\begin{figure}[t]
\vskip 0.2in
\begin{center}
\includegraphics[width=.5\columnwidth]{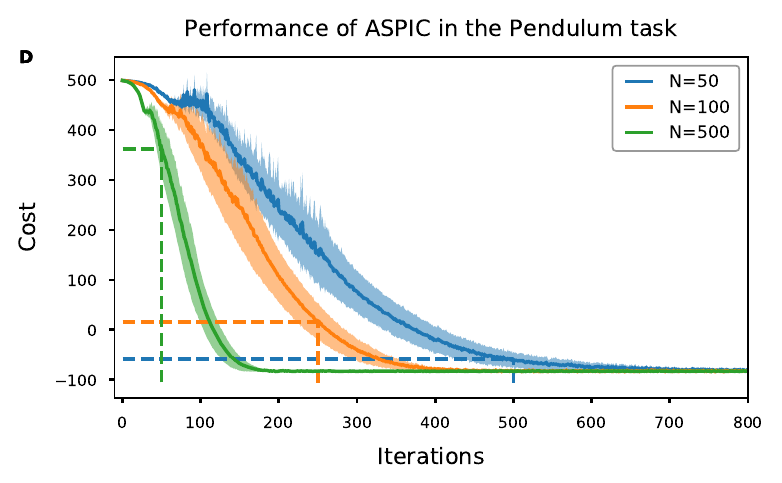}
\caption{
Performance as a function of the number of iterations for different values of $N\in\{50,100,500\}$ in the Pendulum swing-up task.
Dashed lines denote the solution for a total fixed budget of $25$K rollouts, i.e., $500$, $250$, and $50$ iterations, respectively.
In this case, $N=50$ achieves near optimal performance whereas using larger values of $N$ leads to worse solutions.
}
\label{fig:exp3}
\end{center}
\vskip -0.2in
\end{figure} 

We now analyze the dependence of the performance of ASPIC on the number of rollouts per iteration~$N$. In general, using larger values of $N$ allows for more reliable gradient estimates and achieves convergence in fewer iterations. However, too large $N$ may be inefficient and lead to suboptimal solutions in the presence of a fixed budget of rollouts.

Figure~\ref{fig:exp3} illustrates this trade-off in the Pendulum swing-up task for three values of $N$. For a total budget of $25$K rollouts (dashed lines), the lowest value of $N=50$ achieves near optimal performance and is preferable to the other choices, despite resulting in higher variance estimates and requiring more iterations until convergence. In particular, the solutions achieved using $N=500$ have cost~$>350$, while for $N=50$, all solutions have cost $< -50$.

\subsection{Interplay Between Smoothing Strength $\Delta$ and Trust Region Size $\mathcal{E}$}
\label{ap:expinter}
To understand better the relation between the smoothing strength and the trust region sizes, we analyze empirically the performance of ASPIC as a function of both $\Delta$ and $\mathcal{E}$ parameters.
We focus on the Acrobot task and in the setting of $N=500$ and intermediate smoothing strength, when smoothing is most beneficial. 

Figure~\ref{fig:deltaeps} shows the cost as a function of $\Delta$ and $\mathcal{E}$ averaged over the first $500$ iterations of the algorithm, and for $10$ different runs.
Larger (averaged) costs correspond runs where the algorithm fails to converge. Conversely, the lower cost, the fastest the convergence.
In general, larger values of $\mathcal{E}$ lead to faster convergence.
However, the convergence is less stable for smaller values of $\Delta$.
For stronger smoothing, the algorithm is more sensitive to $\mathcal{E}$.

\begin{figure}[t]
\vskip 0.2in
\begin{center}
\centerline{\includegraphics[width=.5\columnwidth]{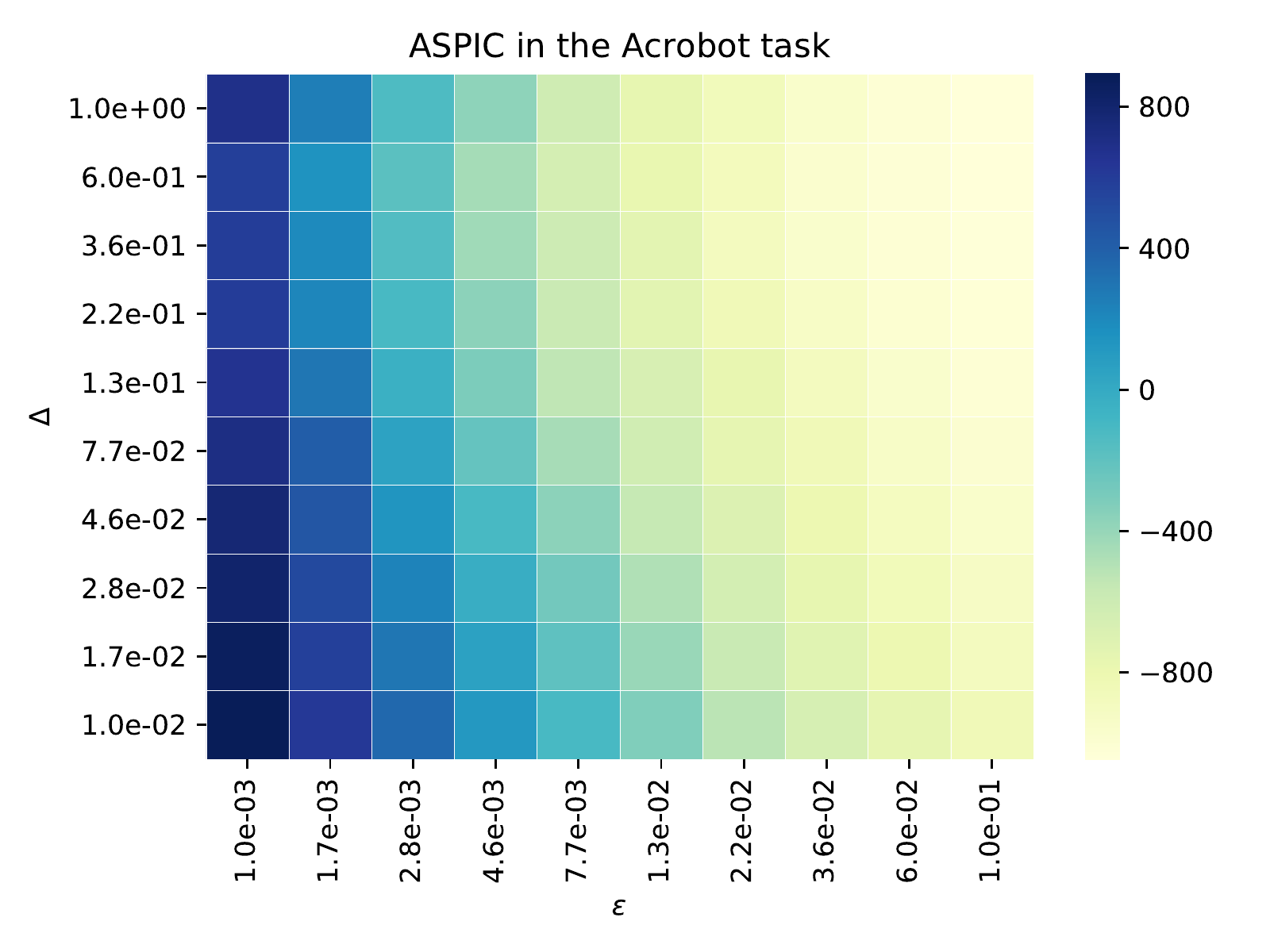}}
\caption{
Solution cost as a function of the smoothing strength $\Delta$ and the trust region size $\varepsilon$ in the Acrobot task.
Shown is the cost averaged over the first $500$ iterations of the algorithm, and for $10$ different runs.
Blue indicates failure to convergence. White indicates the solutions which converged fastest.
}
\label{fig:deltaeps}
\end{center}
\vskip -0.2in
\end{figure}

\subsection{Details of Numerical Experiments}
\label{ap:detnexp}

\subsection*{\underline{Linear-Quadratic control Task}}
\begin{description}
\item[Dynamics:] The dynamics are ODEs integrated by an Euler scheme (see section \ref{sec:LQ}). The differential equation is initialized at $x=0$. $dt=0.1$
\item[Control problem:] $\gamma = 1$. Time-Horizon $T=10s$. State-Cost function: see section~\ref{sec:LQ}. $(x_0,t_0)=(-10,1)$, $(x_1,t_1)=(10,2)$,$(x_2,t_2)=(-10,3)$, $(x_3,t_3)=(-20,4)$, $(x_4,t_4)=(-100,5)$, $(x_5,t_5)=(-50,6)$, $(x_6,t_6)=(10,7)$, $(x_7,t_7)=(20,8)$, $(x_8,t_8)=(30,9)$. Variance of uncontrolled dynamics  $\nu=1$.
\item[Algorithm:] Batchsize: $N=100$. $\mathcal{E}=0.1$. $\Delta = 0.2\log100$. Conjugate gradient iterations: 2 (for each time step separately). The parametrized controller was initialized at $\theta=0$.
\end{description}

\subsection*{\underline{Pendulum Task}}
\begin{description}
\item[Dynamics:] The differential equation for the pendulum is:
\begin{align*}
\ddot{x} + c\omega_0\dot{x} +\omega_0^2\sin(x) = \lambda \lb u + \xi \rb
\end{align*}
with
\begin{itemize}
\item $c\omega_0=0.1$ [$s^{-1}$]
\item $\omega_0^2 = 10.$ [$s^{-2}$]
\item $\lambda=0.2$
\end{itemize}
We implemented this differential equation as a first order differential equation and integrated it with an Euler scheme with $dt=0.01$.
The pendulum is initialized resting at the bottom: 
\begin{align*}
\dot{x}=0,x=0.
\end{align*}
As a parametrized controller we use a time varying linear feedback controller: 
\begin{align*}
u_\theta(x,\dot{x},t) =  
\theta_{3,t}\cos(x)+
\theta_{2,t}\sin(x)+
\theta_{1,t}\dot{x}+
\theta_{0,t}.
\end{align*}
The parametrized controller was initialized at $\theta=0$.
\item[Control-problem:] $\gamma = 1.$. $T=3.0s$. The State-Cost function has End-Cost only: 
\begin{align*}
V(x,\dot{x},t)=\delta(t-T) \lb-500Y+10\dot{x}^2 \rb
\end{align*}
 with $Y= -\cos(x)$ (height of tip). Variance of uncontrolled dynamics  $\nu=1$

\item[Algorithm:] Batchsize: $N=500$. $\mathcal{E}=0.1$. $\Delta = 0.5$. The Fisher-matrix was inverted for each time step separately using the scipy pseudo-inverse with \mbox{rcond=1e-4}.
\end{description}

\subsection*{\underline{Acrobot Task}}
\begin{description}
\item[Dynamics:] We use the definition of the acrobot as in \cite{spong1995swing}.
The differential equations for the acrobot are:
\begin{align*}
d_{11}(x)\ddot{x}_1 + d_{12}(x)\ddot{x}_2 + h_1(x,\dot{x}) + \phi_1(x) &= 0 \\
d_{21}(x)\ddot{x}_1 + d_{22}\ddot{x}_2 + h_2(x,\dot{x}) + \phi_2(x) &= \lambda\cdot\lb u + \xi \rb
\end{align*}
with
\begin{align*}
d_{11} &= m_1  l_{c1}^2 + m_2 \lb l_1^2 + l_{c2}^2 + 2 l_1 l_{c2} \cos(x_2 ) \rb + I_1 + I_2 \\      
d_{12} &= m_2 \lb l_{c2}^2 + l_1 l_{c2} \cos(x_2) \rb + I_2 \\     
d_{21} &= d_{12} \\
d_{22} &= m_2 l_{c2}^2 + I_2 \\
h_1 &= -m_2 l_1 l_{c2} \sin (x_2) \lb \dot{x}_2^2 + 2\dot{x}_1\dot{x}_2 \rb\\
h_2 &= m_2 l_1 l_{c2} \sin (x_2) \dot{x}_1^2 \\
\phi_2 &= m_2 l_{c2} G \cos\lb x_1 + x_2\rb \\     
\phi_1 &= (m_1 l_{c1} + m_2 l_1) g \cos\lb x_1\rb + \phi_2 \\
\end{align*}
with the parameter values
\begin{itemize}
\item $G=9.8$
\item $l_1 = 1.$ [m]
\item $l_2 = 2.$ [m]
\item $m_1 = 1.$ [kg] mass of link 1
\item $m_2 = 1.$ [kg] mass of link 2
\item $l_{c1} = 0.5$ [m] position of the center of mass of link 1
\item $l_{c2} = 1.0$ [m] position of the center of mass of link 2
\item $I_1 = 0.083$ moments of inertia for both links
\item $I_2 = 0.33$ moments of inertia for both links
\item $\lambda = 0.2$
\end{itemize}
We implemented this differential equation as a first order differential equation and integrated it with an Euler scheme with $dt=0.01$.
The acrobot is initialized resting at the bottom: 
\begin{align*}
\dot{x}_1=0,\dot{x}_2=0,x_1=-\frac{1}{2}\pi, x_2 = 0.
\end{align*}
As a parametrized controller we use a time varying linear feedback controller: 
\begin{align*}
u_\theta(x,\dot{x},t) = & 
\theta_{8,t}\cos(x_1)+
\theta_{7,t}\sin(x_2)+
\theta_{6,t}\cos(x_2)+
\theta_{5,t}\sin(x_2)+
\\ & +
\theta_{4,t}\sin(x_1+x_2)+
\theta_{3,t}\cos(x_1+x_2)+
\theta_{2,t}\dot{x_1}+
\theta_{1,t}\dot{x_2}+
\theta_{0,t}.
\end{align*}
The parametrized controller was initialized at $\theta=0$.
\item[Control-problem:] $\gamma = 1.$. Time-Horizon: $T=3.0s$. The State-Cost function has End-Cost only: 
\begin{align*}
V(x,\dot{x},t)=\delta(t-T) \lb-500Y+10(\dot{x_1}^2+\dot{x_2}^2)\rb
\end{align*}
 with $Y= -l_1 \cos(x_1) - l_2 \cos(x_1 + x_2)$ (height of tip).
Variance of uncontrolled dynamics  $\nu=1$.

\item[Algorithm:] Batchsize: $N=500$. $\mathcal{E}=0.1$. $\Delta = 0.5$. The Fisher-matrix was inverted for each time step separately using the scipy pseudo-inverse with \mbox{rcond=1e-4}.
\end{description}

\subsection*{\underline{Walker}}
\begin{description}
\item[Dynamics:] For dynamics and the state cost function we used "BipedalWalker-v2" from the OpenAI gym \cite{1606.01540}.
The policy was a Gaussian policy, with static variance $\sigma=1$.  The state dependent mean of the Gaussian policy  was a neural network controller with two hidden layers with 32 neurons, each. The activation function is a tanh. For the initialization we used Glorot Uniform  (see \cite{glorot2010understanding}).
The inputs to the neural network was the observation space provided by OpenAI gym task "BipedalWalker-v2":
State consists of hull angle speed, angular velocity, horizontal speed, vertical speed, position of joints and joints angular speed, legs contact with ground, and 10 lidar rangefinder measurements.
			
\item[Control-problem:] $\gamma = 0$. Time-Horizon: defined by OpenAI gym task ``BipedalWalker-v2''.
State-Cost function defined by OpenAI gym task "BipedalWalker-v2": Reward is given for moving forward, total 300+ points up to the far end. If the robot falls, it gets -100. Applying motor torque costs a small amount of points, more optimal agent will get better score.

\item[Algorithm:] Batchsize: $N=100$. $\mathcal{E}=0.01$. $\Delta = 0.05 \log 100$. Conjugate gradient iterations: 10.
\end{description}


\end{appendix}

\end{document}